\newcommand{\remove}[1]{}
\newcommand{\h}{\hspace*{0.2in}}
\begin{document}
\title{Improved Paths to Stability for the Stable Marriage Problem
\thanks{Supported by NSF CNS-1812349, CNS-1563544, and the Cullen Trust for Higher Education Endowed Professorship}
}
%
%

\author{Vijay K. Garg\orcidID{0000-0002-5797-4389} \and Changyong Hu\orcidID{0000-0003-4074-0313}}
%
%
\institute{
The University of Texas at Austin,\\
\email{\{garg|colinhu9\}@ece.utexas.edu}
}
\maketitle   
\begin{abstract}
The stable marriage problem has wide applications in distributed computing such as the placement of virtual
machines in a distributed system.
The stable marriage problem requires one to find a marriage with no blocking pair.
Given a matching that is not stable, Roth and Vande Vate have shown that there exists a sequence of matchings that
leads to a stable matching in which each successive matching is obtained by satisfying a blocking pair.
The sequence produced by Roth and Vande Vate's algorithm is of length $O(n^3)$ where $n$ is the number of men (and women).
In this paper, we present an algorithm that achieves stability in a sequence of matchings of length $O(n^2)$. We also give an efficient algorithm to find the stable matching closest to the given initial matching under an appropriate distance function between matchings.
\keywords{Stable Matching  \and Nearest Stable Matching.}
\end{abstract}

\section{Introduction}
The Stable Matching Problem \cite{gale1962college} has wide applications in distributed computing such as 
the placement of virtual machines in a distributed system \cite{maggs2015algorithmic} or
the placement of files in a distributed system.
It has applications in many other numerous fields such as economics and resource allocation with multiple books 
and survey articles \cite{gusfield1989stable,knuth1997stable,roth1992two,iwama2008survey,david2013algorithmics}.
In the standard version of the problem, there are $n$ men and $n$ women each with their totally ordered preference list. 
The goal is to find 
a matching between men and women such that there is no blocking pair, i.e., there is no pair of a woman and a man such that they are not married to each other but
prefer each other over their partners. 
The standard Gale-Shapley (GS) algorithm produces such a matching starting from an empty matching with the deferred acceptance proposal algorithm that takes
$O(n^2)$ proposals. The algorithm produces the man-optimal stable matching.

In many applications, it is useful to consider the initial state of the system as an arbitrary assignment of men to women and then to find
a path to a stable matching. For example, suppose that we consider a system in which there are more women than men and suppose
that every man is matched to a unique woman such that there is no blocking pair. Now, if a new man or a woman
joins the system, it is more natural to start with the initial state as the existing assignment rather than the empty matching.
In particular, if there is some cost associated with breaking up an existing couple, then we may be interested in the paths to stability
that are of short lengths. Hence, this generalization allows one to consider {\em incremental} stable matching algorithms.

As another example, suppose that we have a stable matching. In a dynamic preference mechanism, a woman may change her
list of preferences. The existing matching may not be stable under new preferences of the woman. Again, it is more natural to
start with the existing matching and then to find a path to a stable matching under new preferences. Thus, the generalization allows one
to consider a {\em dynamic} stable matching algorithm in which preferences of a man or a woman may change and the goal is to find
a stable matching under new preferences. 

Given a matching, a natural method to make progress towards a stable matching is as follows.
The man and the woman in the blocking pair
are married and their spouses are divorced. By marrying these divorcees,
we get another matching. 
The reader is referred to the book \cite{david2013algorithmics} for a detailed discussion
of algorithms that go from a matching to a stable matching.
Knuth \cite{knuth1997stable} showed that starting from any matching and iteratively satisfying a blocking pair may lead to 
a cycle. Abeledo and Rothblum \cite{abeledo1995paths} have shown that a cycle exists even if one chooses the best blocking pair to satisfy at each step.
A pair $(p, q')$ is the best blocking pair for $p$ if for any other blocking
pair $(p, q)$ in $M$, $p$ prefers $q'$ to $q$. 
Indeed, it has been shown by Tamura \cite{tamura1993transformation} and independently by Tan and Su \cite{tan1995divorce}, that there are matchings for which it is not possible to
reach a stable marriage by marrying off divorcees.  
However, if the divorcees are allowed to remain single, then one can achieve stability. The
Roth and Vande Vate (RVV) mechanism \cite{roth1990random} is the most well known method to determine a path to stability. Their algorithm
introduces agents (men or women) incrementally and let them iteratively reach a stable matching.
Given any matching $M_0$, the RVV mechanism produces a sequence of matchings $M_0, M_1, \ldots, M_t$ such that
$M_t$ is stable matching and for each $k$ ($1 \leq k \leq t$), $M_k$ is obtained from $M_{k-1}$ by satisfying a blocking pair.
The value of $t$ is at most $2n^3$ (assuming that the number of acceptable pairs
is $n^2$).

In this paper, we analyze the path to stability from the perspective of traversal in the {\em proposal vector lattice}. Any man-saturating matching corresponds to a unique
proposal vector but when the matching is incomplete there may be multiple proposal vectors corresponding to it. Working with proposal vectors instead of matching
allows us to generate shorter sequences to a  stable matching. In particular, we show that given any proposal vector $G_0$, there exists a sequence
of proposal vectors $G_0, G_1, \ldots, G_t$ such that $G_t$ corresponds to a stable matching and for
each  $k$ ($1 \leq k \leq t$), $G_k$ is obtained from $G_{k-1}$ by either increasing the choice number for one man (thereby worsening his match) or
decreasing the choice number for one man (thereby improving his match). The value of $t$ is at most $2m^2$ where $m$ is the number of men.
Our result can also be phrased in terms of matching as follows.
Given any matching $M_0$, there exists a sequence
of matching $M_0, M_1, \ldots, M_t$ such that $M_t$ corresponds to a stable matching and for
each  $k$ ($1 \leq k \leq t$), $M_k$ is obtained from $M_{k-1}$ by either 
(1) marrying a man whose current partner is in a blocking pair (or, if he is single) to another woman who is agreeable to his proposal, or
(2) by marrying a woman to her best blocking pair partner.
 The value of $t$ is at most $2m^2$ where $m$ is the number of men.
 Thus, this sequence is shorter than the RVV sequence by a factor of $m$.

We propose four algorithms in this paper for achieving stability (see Fig. \ref{fig:lattice}). The first algorithm $\alpha$ is a generalization of the GS algorithm
to find the man-optimal marriage. The GS algorithm starts with the matching that results when all men propose to their top choice. It then
determines the man-optimal stable marriage in $O(n^2)$ moves. What if instead of the top choices, men propose to
any arbitrary vector of women? In such a scenario, a woman cannot accept the first proposal she receives (as in the GS algorithm),
because that may result in an unstable matching. Algorithm $\alpha$ gives the rules for advancing from an arbitrary proposal vector to
end up in a stable marriage (whenever possible). 
 Given any initial matching $M_0$, algorithm $\alpha$ produces
a sequence of matchings ending in a stable marriage $M_t$ such that the matching only improves from the perspective of women
and gets only worse from the perspective of men. This sequence is of length $O(n^2)$. Since there may not exist any stable matching
that is based only on improving from the women's perspective, algorithm $\alpha$ may return null in these cases (for example, when the 
initial matching $M_0$ assigns some man a partner who is ranked lower than in the man-pessimal matching). 
The set of stable matchings can be viewed  as a sublattice of the lattice of all proposal vectors and the algorithm
$\alpha$ can be viewed as upward traversal in this lattice from any arbitrary proposal vector to a proposal vector that corresponds to a stable matching.

One of the goals of the paper is to find a matching that is not too far from the original matching
(or the initial proposal vector).
Given any proposal vector $I$, the {\em regret} of a man is defined as the rank the woman he is assigned in $I$, i.e.,
 if a man is assigned his $k^{th}$ top choice in $I$ then his regret is $k$. 
Given two proposal vectors $I$ and $M$, we define the distance between 
$I$ and $M$, $dist(I, M)$ as the sum of differences of regrets for all men in $I$ and $M$, 
i.e. the $L_1$ distance between two vectors, $dist(I, M) = \|I - M\|_1$.
Algorithm $\alpha$ guarantees that the stable matching $M_t$ computed has the least distance of all stable matchings that
are better than $I$ from the women's perspective. 

The second algorithm $\beta$ does the downward traversal in the proposal lattice in search of a stable marriage.
Algorithm $\beta$ also takes an arbitrary proposal vector $I$ as the starting point and results in a stable marriage whenever possible.
It improves the matching from the perspective of men. When men and women are equal then such a traversal
can be accomplished by switching the roles of men and women. However, in this paper we assume that
the number of men $m$ may be much smaller than the number of women $w$.  All our algorithms have 
time complexity of $O(m^2 + w)$. Switching the roles of men and women is not feasible
without increasing the complexity of our algorithms. 
Algorithm $\beta$ guarantees that the stable matching $M_t$ computed has the least distance of all stable matchings that
are better than $I$ from the men's perspective. 

The third algorithm $\gamma$ combines a downward traversal with an upward traversal to guarantee that 
irrespective of the initial matching $I$, there always exists a sequence of matchings that results in a stable matching.
This sequence consists of two subsequences each of length $O(m^2)$ giving us the path to stability of length $O(m^2)$, thereby improving
on the RVV mechanism.
Intuitively, the RVV algorithm may traverse the lattice
in the upward direction or downwards direction multiple times. In contrast, our algorithm $\gamma$ traverses the proposal lattice 
once in the downward direction and then in the upward direction ending in a stable matching.
It generates a sequence of proposal vectors that results in a stable matching
with $O(m^2+w)$ time complexity. 

Our last algorithm $\delta$ finds the closest stable matching to the given initial proposal vector.
Algorithm $\delta$ is based on a linear programming formulation of the stable marriage problem by
Rothblum \cite{rothblum1992characterization}. By appropriately defining the objective function to minimize the distance from
the initial proposal vector, we get a polynomial time algorithm to find the closest stable marriage.

Our algorithms are also useful in the context of arriving at more egalitarian matchings than
we get using the Gale-Shapley algorithm. If there are $m$ men, instead of starting with the proposal vector
$(1, 1, \ldots, 1)$, we may start with the proposal vector $(m/2, m/2, \ldots, m/2)$ to find a stable 
vector close to the center of the proposal lattice. Alternatively, we can also start with various
proposal vectors chosen at random and obtain multiple stable matchings. Once we have multiple stable matchings, we can use Teo and Sethuraman's median stable matching theorem \cite{teo1998geometry} to
return the median stable matching.

We note here that the path to a stable matching
from an unstable matching can be of different types. Given any blocking pair, the RVV algorithm is based
on a {\em better response} dynamics. Under these dynamics, any blocking pair $(p, q)$ for a matching $M$ is chosen and they are matched. The partners of $p$ and $q$ in $M$, if any, are unmatched. 
An alternative approach based on {\em best response dynamics} is explored in \cite{ackermann2011uncoordinated}. Here, one side, say the set of women, is considered active and the other side is considered passive. An active agent of a blocking pair $(p,q)$ in $M$
plays the {\em best response} if $p$ is matched to $q'$ such that 
$(p, q')$ is the best blocking pair for $p$. In other words, if there is any other blocking
pair $(p, q)$ in $M$, then $p$ prefers $q'$ to $q$. The paper \cite{ackermann2011uncoordinated} gives an example of a two sided market with three men and women in which {\em best response dynamics} can cycle.
They also propose an algorithm to generate a sequence of $2mw$ best responses from any matching $M$ that leads to a stable matching. Their algorithm has some similarities with our algorithm in that it also consists
of two phases. In the first phase, only matched women can make best response moves whereas in the
second phase all women can play the best response. 
However, a crucial difference from our algorithm is that we are interested in finding
a matching that is close to the original matching (where the distance is defined based on the proposal lattice). The algorithm in \cite{ackermann2011uncoordinated} does not concern itself with the issue
of the distance between matchings. In particular, under the {\em best response dynamics}, their algorithm
has the tendency to get to the woman-optimal marriage irrespective of the initial matching.
In contrast, our algorithms provide guarantees on the matching returned.
For example, Algorithm $\alpha$ returns the proposal vector that has the least distance from $I$, the
initial proposal vector of all the proposal vectors that are bigger than $I$.
 
In summary, the paper makes the following contributions.
\begin{itemize}
    \item It proposes Algorithm $\alpha$ that takes any proposal vector (and therefore any matching) to a stable proposal vector in $O(m^2+w)$ time such that the resulting proposal vector has the least distance from the initial proposal vector of all stable proposal vectors that are greater than initial proposal vector. This algorithm generalizes
    the GS algorithm which assumes the initial proposal vector to be the top choices.
    \item It proposes Algorithm $\beta$ that takes any input proposal vector and generates a stable proposal vector that has the least distance from the initial proposal vector to all stable proposal vectors that are less than or equal to the initial proposal vector.
    This algorithm give a dual of the GS algorithm in which the active agents improve their choices to get a stable proposal vector (if one exists).
    \item It proposes Algorithm $\gamma$ that takes any input proposal vector and always generates a stable proposal vector by combining aspects of algorithms $\alpha$ and $\beta$.
    \item It proposes a polynomial time algorithm $\delta$ (based on linear programming) that takes any input proposal vector and
    generates a stable proposal vector that is closest to the input proposal vector.
\end{itemize}
\begin{figure}
    \centering
    \begin{tikzpicture}[scale=0.90]
    
    \draw[thick] (0,0) -- (6,4.5) -- (2,8) -- (-4,3.5) -- (0,0);
    
    \draw[fill=white] (3,0.8) circle (3pt);
    \node at (4.5,0.8) {unstable vectors};
    \draw[fill=black] (3,1.5) circle (3pt);
    \node at (4.5,1.5) {stable vectors};
    
    \node at (0,-0.3) {$(1,1,\cdots,1)$};
    \draw[fill=white] (0,0) circle (3pt);
    \draw[fill=black] (0.5,1.5) circle (3pt);
    \draw [->,decorate,decoration={snake,amplitude=.4mm,segment length=2mm,post length=1mm}]
(0,0.1) -- (0.5,1.4);
    \node at (-0.6,1.5) {man-optimal};
    \node at (0.55,0.75) {GS};
    
    \draw[fill=black] (-2.5,4) circle (3pt);
    \draw[fill=white] (-1.5,4) circle (3pt);
    \draw [->,decorate,decoration={snake,amplitude=.4mm,segment length=2mm,post length=1mm}]
(-1.6,4) -- (-2.4,4);
    \node at (-2, 4.3) {$\delta$};
    
    \draw[fill=white] (2,5.8) circle (3pt);
    \draw[fill=white] (1.75,5.15 ) circle (3pt);
    \draw[fill=black] (1.5,4.5) circle (3pt);
    \draw [->,decorate,decoration={snake,amplitude=.4mm,segment length=2mm,post length=1mm}]
(2,5.7) -- (1.75,5.25);
    \draw [->,decorate,decoration={snake,amplitude=.4mm,segment length=2mm,post length=1mm}]
(1.75,5.05) -- (1.5,4.6);
    \node at (2.1, 5.15) {$\beta$};
    
    \draw[fill=white] (0,2) circle (3pt);
    \draw[fill=white] (0,3 ) circle (3pt);
    \draw[fill=black] (0,4) circle (3pt);
    \draw [->,decorate,decoration={snake,amplitude=.4mm,segment length=2mm,post length=1mm}]
(0,2.1) -- (0,2.9);
    \draw [->,decorate,decoration={snake,amplitude=.4mm,segment length=2mm,post length=1mm}]
(0,3.1) -- (0,3.9);
    \node at (0.4, 3) {$\alpha$};
    
    \draw[fill=white] (4,5) circle (3pt);
    \draw[fill=white] (3.5,4) circle (3pt);
    \draw[fill=white] (3,3) circle (3pt);
    \draw[fill=black] (2.7,3.7) circle (3pt);
    \draw [->,decorate,decoration={snake,amplitude=.4mm,segment length=2mm,post length=1mm}]
(4,4.9) -- (3.5,4.1);
    \draw [->,decorate,decoration={snake,amplitude=.4mm,segment length=2mm,post length=1mm}]
(3.5,3.9) -- (3,3.1);
    \draw [->,decorate,decoration={snake,amplitude=.4mm,segment length=2mm,post length=1mm}]
(3,3.1) -- (2.7,3.6);
    \node at (3.9, 4) {$\gamma$};

    \draw[fill=white] (2,8) circle (3pt);
    \node at (2,8.3) {$(w,w,\cdots,w)$};
    \draw[fill=black] (1.1,6.5) circle (3pt);
    \node at (2.4,6.5) {woman-optimal};
    
    \end{tikzpicture}
    \caption{A proposal lattice with various traversals. Algorithm GS always starts from the bottom of the lattice and finds the man-optimal vector. Algorithm $\alpha$ starts from any vector $I$ and finds the smallest stable vector which is greater than or equal to $I$ (if any exists). Algorithm $\beta$ finds the largest stable vector which is less than or equal to $I$. Algorithm $\gamma$ always converges to a stable vector. Algorithm $\delta$ finds a stable vector that is closest in Manhattan metric.}
    \label{fig:lattice}
\end{figure}

\vspace*{-0.3in}
\section{Proposal Vector Lattice \label{sec:proposal-vector-lattice}}
We consider stable marriage instances with $m$ men numbered $1,2,\ldots, m$ and $w$ women numbered $1, 2, \ldots w$. 
We assume that the number of women $w$ is at least 
$m$; otherwise, the roles of men and women can be switched.
The variables $mpref$ and $wpref$ specify  the men preferences and the women preferences, respectively.
Thus, $mpref[i][k]=j$ iff woman $j$ is the $k^{th}$ preference for man $i$. 
Fig. \ref{fig:SMP2} shows an instance of the stable matching problem.

We use the notion of a {\em proposal vector} for our algorithms. A (man) proposal vector, $G$, is of dimension $m$, the
number of men. We view any vector $G$ as 
follows: $(G[i]=k)$ if man $i$ has proposed to his $k^{th}$ preference, i.e. the woman given by $mpref[i][k]$.
If $mpref[i][k]$ equals $j$, then $G[i]$ equals $k$ corresponds to man $i$ proposing to woman $j$.
For convenience, let $\rho(G, i)$ denote the woman $mpref[i][G[i]]$.
The vector $(1,1,\ldots, 1)$ corresponds the proposal vector in which every man has proposed to his top choice.
Similarly, $(w,w,\ldots, w)$ corresponds to the vector in which every man has proposed to his last choice.
Our algorithms can also handle the case when the lists are incomplete, i.e., a man prefers staying alone to being matched to
some women. However, for simplicity, we assume complete lists.
It is clear that the set of all proposal vectors forms a distributive lattice under the natural less than order
in which the meet and join are given by the component-wise minimum and the component-wise maximum, respectively.
This lattice has $w^m$ elements.

%

Given any proposal vector, $G$, there is a unique matching defined as follows:
man $i$ and $\rho(G, i)$ are matched in $G$ if the proposal by man $i$ is the best for that woman in $G$.
A man $p$ is unmatched in $G$ if his proposal is not the best proposal for that woman in $G$. A woman $q$
is unmatched in $G$ if she does not receive any proposal in $G$; otherwise, she is matched with the best proposal for her in $G$.


A proposal vector $G$ represents a {\em man-saturating matching} iff no woman receives more than one proposal in $G$.
Formally, $G$ is a man-saturating matching if $\forall i,j: i \neq j: \rho(G, i) \neq \rho(G, j)$.
When the number of men equals the number women, a man-saturating matching is a perfect matching (all men and
women are matched).
When the number of men is less than the number of women, then $G$ is a man-saturating matching if every
man is matched (but some women are unmatched).
We say that a matching $M_1$ (or a marriage) is less than another matching $M_2$
if the proposal vector for $M_1$ is less than that of $M_2$. Thus, the man-optimal
marriage is the {\em least} stable matching in the proposal lattice and woman-optimal
marriage is the {\em greatest} stable matching. 

A proposal vector $G$ may have one or more blocking pairs. 
A pair of man and woman $(p, q)$ is a {\em blocking pair} in $G$ iff
$\rho(G, p)$ is not $q$, man $p$ prefers $q$ to $\rho(G, p)$, and woman $q$ prefers
$p$ to any proposal she receives in $G$. Observe that this definition works even when woman $q$ is unmatched,
i.e. she has not received any proposals in $G$. In this case, woman $q$ prefers $p$ to staying alone, and $p$ prefers $q$ to $\rho(G, p)$.

A proposal vector $G$ is a stable marriage (or a stable proposal vector) iff it is a man-saturating matching and there are no blocking pairs in $G$.
The usual stable matching problem is to determine such a proposal vector given $mpref$ and $wpref$. The problem
that we consider in this paper includes an additional input: the initial proposal vector, $I$. The goal is to traverse the
proposal lattice starting from $I$ to 
find a stable proposal vector $G$.
In this paper, we use two different mechanisms --- upward traversal and downward traversal --- to reach a stable matching proposal vector.

Algorithm $\alpha$ uses upward traversal.
Suppose that $q$ is matched with $p'$ in $G$ and is part of the blocking pair $(p,q)$.
Instead of satisfying the blocking pair $(p,q)$, we {\em move} $p'$ to his next choice in his preference list.
This move makes the proposal vector better from the women's perspective and worse from the men's perspective.
By continuing in this manner if some man makes a proposal to $q$ who is even better than $p$, the blocking pair $(p,q)$ gets eliminated.
If no man better than $p$ ever makes a proposal to $q$, then there is no proposal vector bigger than $G$ that corresponds to a stable matching.

\vspace*{-0.1in}
\begin{small}
\begin{figure}
\begin{tabular}{l | l l l || l l  l | l l l |}
$mpref$ & & &  & & & $wpref$ \\
$m_1$  &   $w_1$ & $w_2$ & $w_3$  &  & & $w_1$ & $m_2$ & $m_1$ & $m_3$ \\
$m_2$   &   $w_2$ & $w_3$ & $w_1$ &  & & $w_2$ & $m_3$ & $m_2$ & $m_1$ \\
$m_3$   &   $w_3$ & $w_1$ & $w_2$ &  & & $w_3$ & $m_1$ & $m_3$ & $m_2$\\
\end{tabular}

\caption{\label{fig:SMP2}  Stable Matching Problem with men preference list ($mpref$) and
women preference list ($wpref$).}
\vspace*{-0.1in}
\end{figure}
\end{small}

Algorithm $\beta$ uses downward traversal in the proposal lattice.
Let $G$ be a proposal vector that is not stable. 
Of all the blocking pairs that $q$ is part of, we choose the best blocking pair
from $q$'s perspective. Let $(p,q)$ be such a blocking pair.
We construct a proposal vector $G'$ that {\em moves} man
$p$ to woman $q$ by changing the proposal of man $p$ from his current proposal to that for woman $q$ and keeping
all other proposals as before. 

Since Algorithm $\alpha$ traverses the lattice upwards, any sequence of proposal vector it generates can be of length at most $m^2$.
Similarly, Algorithm $\beta$ also generates a sequence of length at most $m^2$. Algorithm $\gamma$ combines one downward traversal and
one upward traversal to go from any proposal vector to a stable matching proposal vector in a sequence of length at most $O(m^2)$.

%

%
We now describe Algorithms $\alpha$, $\beta$ and $\gamma$ in detail.

\section{Algorithm $\alpha$}
Given any initial proposal vector $I$, Algorithm $\alpha$, finds a stable matching
$G$ such that $I \leq G$ whenever there exists such a stable matching. 
The initial proposal vector is arbitrary instead of the top choice for each man.
This generalizes the GS algorithm which starts with $I=(1,1,\ldots,1)$.
Observe that the GS algorithm does not work when the starting proposal vector
is arbitrary. The GS algorithm requires men to make proposals and women to accept
the best proposals they have received so far.
If the starting proposal vector is a man-saturating matching but not stable, then each woman gets a unique proposal.
All women would accept the only proposal received, but the resulting marriage would not be stable.

This instability may arise due to two reasons. First,
it may arise when 
 the number of women exceeds the number of men.
If we start with the top choices of all men, then the GS algorithm would still return a man-optimal
stable matching with the excess women unmatched. However, if we start from an arbitrary proposal vector, 
we can end up with all women getting unique proposals but there may exist an unmatched woman who is
preferred by some man over his current match. 

To tackle this problem, we first do a simple check on the initial proposal vector
as given by the following Lemma.
Let $numw(I)$ be the total number of unique women that have been proposed in all vectors that are less than or equal to $I$, i.e., $numw(I) = \#\{j \in [w] : \exists G \leq I, \exists i, \rho(G,i) = j\}$.
\begin{lemma}\label{lem:ubound}
Let $I$ be the initial proposal vector for any stable marriage instance with $m$ men.
There is no stable marriage for any proposal vector
$G \geq I$ whenever $numw(I) > m$.
\end{lemma}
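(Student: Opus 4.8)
The plan is to argue by contradiction using a pigeonhole (counting) argument on the women. Suppose, for contradiction, that some stable proposal vector $G$ with $G \geq I$ exists. Since a stable marriage is man-saturating, exactly $m$ women receive a proposal in $G$ and every other woman is unmatched in $G$.

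First I would unpack $numw(I)$ concretely. For a fixed man $i$, as $G'$ ranges over all proposal vectors with $G' \leq I$, the entry $G'[i]$ ranges over $\{1,\dots,I[i]\}$, so $\rho(G',i)$ ranges over exactly the first $I[i]$ women on man $i$'s list. Hence $numw(I) = \bigl|\,\bigcup_{i=1}^{m} \{\, mpref[i][k] : 1 \leq k \leq I[i]\,\}\,\bigr|$; call this set $W_I$. Since $|W_I| = numw(I) > m$ while only $m$ women are matched in $G$, there must be a woman $q \in W_I$ that is unmatched in $G$.

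Next I would exhibit a blocking pair to contradict stability. Because $q \in W_I$, there is a man $i$ and an index $k \leq I[i]$ with $mpref[i][k] = q$, i.e. $q$ lies within man $i$'s top $I[i]$ choices. Since $G \geq I$, we have $G[i] \geq I[i] \geq k$, so in $G$ man $i$ is matched to his $G[i]$-th choice, which is no better for him than his $k$-th choice $q$. Because $q$ is unmatched in $G$ she receives no proposal at all, so $\rho(G,i) \neq q$; combined with $k \leq G[i]$ and distinctness of $i$'s preferences this forces $k < G[i]$, so man $i$ strictly prefers $q$ to his partner in $G$. On the other side, $q$ being unmatched means she prefers $i$ to remaining single (lists are complete, so $i$ is acceptable to her). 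Thus $(i,q)$ is a blocking pair for $G$, contradicting the stability of $G$, and the lemma follows.

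The argument is essentially routine once the set $W_I$ is identified; the only point that needs a moment's care is the strict preference of man $i$ for $q$, i.e. ruling out the degenerate possibility $\rho(G,i)=q$ — which cannot occur precisely because an unmatched woman receives no proposals whatsoever. I would also flag that this is exactly where completeness of the preference lists enters (for incomplete lists one would need $i$ to find $q$ acceptable, which here holds automatically since $q$ appears on $i$'s list within his first $I[i]$ entries).
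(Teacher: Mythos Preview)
Your proof is correct and follows essentially the same route as the paper's: use a pigeonhole argument to find a woman $q$ who appears among the first $I[i]$ choices of some man $i$ yet receives no proposal in $G$, and observe that $(i,q)$ is a blocking pair. Your write-up is in fact more careful than the paper's (you explicitly unpack $numw(I)$ as the set $W_I$ and justify the strict preference $k < G[i]$), but the underlying argument is the same.
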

\begin{proof}
Consider any proposal vector $G \geq I$.
Since the total number of men is $m$, there is at least one woman $q$ who has been proposed to in a vector less than $G$ and who does not have any proposal in 
$G$. Suppose that proposal was made by man $p$. Then, man $p$ prefers $q$ to $\rho(G, p)$ and $q$ prefers $p$ to staying alone.
\end{proof}

Hence, in our algorithm we only consider $I$ such that the total number of women proposed until $I$ (in all vectors less than or equal to $I$) is at most $m$.

Instability may arise 
even when the number of men and women are equal.
In Fig. \ref{fig:SMP2}, this situation would arise if we started with $I=(2,2,2)$.
The initial proposal vector may be a perfect matching but not stable. 
A woman $q$ may receive a unique proposal from a man $p$ but she prefers
$p'$ who has made his proposal to $q'$ even though $p'$ prefers $q$ to $q'$. Such a scenario cannot happen
when men propose starting from the top choice and in the decreasing order as in the GS algorithm. However, now the
starting vector is arbitrary and a blocking pair may exist in the man-saturating matching.

To address this problem, we define the notion of a {\em forbidden} man in a proposal vector.
\begin{definition}[forbidden]
A man $i$ is forbidden in $G$ if either he is unmatched in $G$ or matched to a woman in $G$
who is part of a blocking pair.
Formally, the predicate $forbidden(G, i)$ holds if there exists another man $j$ such that
either (1) both $i$ and $j$ have proposed to the same woman in $G$ and that woman prefers $j$, or
(2) $(j, \rho(G, i))$ is a blocking pair in $G$.
\end{definition}


We first show that 
\begin{lemma}\label{lem:forbidden}
Let $G$ be any proposal vector such that $numw(G) \leq m$.
There exists a man $i$ such that $forbidden(G, i)$ iff $G$ is not a stable marriage.
\end{lemma}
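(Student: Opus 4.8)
The plan is to prove the two directions of the biconditional separately. The reverse implication will be immediate from the definitions and will not use the hypothesis $numw(G)\le m$; the forward implication is where that hypothesis does the work, via a single key observation.

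For the ($\Leftarrow$) direction, I would assume $forbidden(G,i)$ holds and split on which clause of the definition is satisfied. If clause (1) holds, some woman receives proposals from two distinct men, so $G$ is not man-saturating and hence not a stable marriage. If clause (2) holds, then $(j,\rho(G,i))$ is a blocking pair in $G$, so $G$ is not a stable marriage. Either way $G$ is not a stable marriage.

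For the ($\Rightarrow$) direction, suppose $G$ is not a stable marriage, so either $G$ is not man-saturating, or $G$ is man-saturating but contains a blocking pair. In the first case, some woman $q$ receives proposals from two distinct men $i,j$; since preferences are strict she prefers one of them, say $j$, and then $forbidden(G,i)$ holds by clause (1). In the second case, fix a blocking pair $(p,q)$. The crucial step is to show that $q$ must be matched in $G$. Suppose not, i.e. $q$ receives no proposal in $G$. Since $p$ prefers $q$ to $\rho(G,p)$, the proposal vector $G'$ that agrees with $G$ except that man $p$ proposes to $q$ satisfies $G'\le G$ and $\rho(G',p)=q$, so $q$ is counted in $numw(G)$; moreover man-saturation means the $m$ men propose to $m$ distinct women in $G$, none of them equal to $q$, and all of these are also counted in $numw(G)$, giving $numw(G)\ge m+1$ and contradicting the hypothesis. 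Hence $q$ is matched, to some man $p'$ with $\rho(G,p')=q$, and $p'\ne p$ because $\rho(G,p)\ne q$; then $(p,\rho(G,p'))$ is a blocking pair in $G$, so $forbidden(G,p')$ holds by clause (2). Thus in every case a forbidden man exists.

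I expect the main obstacle to be exactly the claim that the woman $q$ of a blocking pair cannot be unmatched when $G$ is man-saturating and $numw(G)\le m$. This is the one place the hypothesis on $numw$ is essential — indeed the lemma genuinely fails without it, for instance when $w>m$ — and it is what forces the forbidden man to be produced via clause (2) as $q$'s current partner rather than directly.
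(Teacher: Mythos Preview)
Your proof is correct and follows essentially the same approach as the paper: both split on the two clauses of $forbidden$ for one direction, and for the other direction split on whether $G$ fails to be man-saturating or has a blocking pair, using the hypothesis $numw(G)\le m$ only to rule out the case that the woman $q$ in a blocking pair is unproposed in a man-saturating $G$. Your counting argument for that step is spelled out in slightly more detail than the paper's, and your $\Leftarrow$/$\Rightarrow$ labels are swapped relative to the usual reading of the stated biconditional, but neither affects the substance.
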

\begin{proof}
First suppose that there exists $i$ such that $forbidden(G, i)$. This mean that there exists a man $j$ such that
$j$ has proposed to the same woman and that woman prefers $j$ or $(j, \rho(G, i))$ is a blocking pair in $G$.
If both $i$ and $j$ have proposed to the same woman in $G$, then it is clearly not a matching. If $(j, \rho(G, i))$ is a blocking pair then $G$ is not stable. 

Conversely, assume that $G$ is not a stable marriage. This means that either $G$ is not a man-saturating matching or there is a blocking pair
for $G$. If it is not a man-saturating matching, then there must be some woman who has been proposed by multiple men.
Any man $i$ who is not the most favored in the set of proposals satisfies $forbidden(G, i)$.
If $G$ is a man-saturating matching but not a stable marriage, then there must be a blocking pair $(p,q)$.
If $q$ has been proposed in $G$ by man $i$, then $(p, \rho(G, i))$ is a blocking pair, and therefore
$forbidden(G, i)$ holds.
If $q$ has not been proposed in $G$ then we know there are at least $m+1$ women that are in $numw(G)$ which violates our assumption on $G$.
\end{proof}

\begin{figure}
\fbox{\begin{minipage}  {\textwidth}\sf

{\bf input}: A stable marriage instance, initial proposal vector $I$\\
{\bf output}: smallest stable marriage greater than or equal to $I$ (if one exists)\\
$forbidden(G, i)$ holds if $i$ is unmatched or his partner forms a blocking pair in $G$.\\
\\
(0) If $numw(I) > m$ then return null else $G$ := $I$;\\
(1) {\bf while} there exists a  man $i$ such that $forbidden(G, i)$\\
(2) \h let $q$ be the next woman in the list of man $i$ such that $i$ has the most preferred proposal to $q$,\\
(3) \h if no such choice after $G[i]$ or the number of women proposed including $q$ exceeds $m$ then\\
\h\h \h return  null;  // "no stable matching exists"\\
(4) \h else $G[i]$ := choice that corresponds to woman $q$;\\
(5) {\bf endwhile};\\
(6) return $G$;
\end{minipage}
} 
\caption{Algorithm $\alpha$ that returns the least stable vector greater than or equal to the given proposal vector $I$.\label{fig:alpha}}
\end{figure}

Algorithm $\alpha$ shown in Fig. \ref{fig:alpha} exploits the $forbidden (G, i)$ function to search for the stable marriage
in the proposal lattice. The basic idea is that if a man $i$ is forbidden in the current proposal vector $G$, then
he must go down his preference list until he finds a woman who is either 
unmatched or prefers him to her current match.
The while loop at line (1) iterates until none of the men are forbidden in $G$. If the while loop terminates then $G$ is a stable marriage on account of Lemma \ref{lem:forbidden}. At line (2), man $i$ advances on his preference list until his proposal is the most preferred proposal to the woman among
all proposals that are made to her in any proposal vector less than or equal to $G$. If there is no such proposal, then
there does not exist any $G \geq I$ such that $G$ is stable and in line (3), the algorithm returns null. Otherwise, the man 
makes that proposal at line (4).


For example, consider the initial proposal vector $G =(2,2,2)$ in Fig. \ref{fig:SMP2}. In this proposal vector, we have the matching $\{ (m_1, w_2), (m_2, w_3), (m_3, w_1) \}$.
While this is a man-saturating matching, it is not stable because it has blocking pairs. Consider the blocking pair $(m_2, w_2)$ (because, $m_2$ prefers $w_2$ to $w_3$ and $w_2$ prefers
$m_2$ to $m_1$). In an upward traversal, we advance the partner of the woman $w_2$ in the blocking pair, $m_1$, to his next choice. The next choice for $m_1$ is $w_3$. This results in $w_3$ rejecting  $m_2$ and therefore
$m_2$ moves to his next choice $w_1$. This proposal, in turn, results in  $w_1$ rejecting $m_3$. Next, $m_3$ makes a proposal to $w_2$ and now $(m_2, w_2)$ is not a blocking
pair. The new proposal vector $(3,3,3)$ which corresponds to the matching $\{ (m_1, w_3), (m_2, w_1), (m_3, w_2) \}$ is a stable matching with all women getting their top choices.

There are two main differences between the GS algorithm and Algorithm $\alpha$.
The first difference is the simple check on the number of women that have been proposed until $G$.
We require $numw(G) \leq m$.
Clearly, if the number of women is equal to the number of men, then $numw(G)$ can never exceed $m$ and this check can be dropped.

The second difference is in the definition of $forbidden(G, i)$. In the standard GS algorithm, a man advances on his preference list
only when he is unmatched, i.e.,
the woman he has proposed to is either matched with someone more preferable or receives a proposal from a more preferable man.
Whenever the GS
algorithm reaches a man-saturating matching, it is a stable matching. 
For any arbitrary $I$ (for example, a man-saturating matching that is not stable), it is important to take
blocking pairs in consideration as part of the forbidden predicate. This difference can be summarized as follows.
\begin{itemize}
\item 
{\em GS Algorithm}: A man proposes to the next woman on his preference list if he is currently unmatched.
\item
{\em Algorithm $\alpha$}: A man $i$ proposes to the next woman on his preference list if he is currently unmatched or matched with a woman $q$ who is in a blocking pair.
\end{itemize}

Observe that if all men propose starting from their top choices, then the rule for Algorithm $\alpha$ becomes identical to
that for the GS Algorithm.

%
To prove the correctness of the algorithm $\alpha$, the following Lemma is crucial.

\begin{lemma}\label{lem:advance}
If $forbidden(G, i)$ holds, then there is no proposal vector $H$ such that $(H \geq G)$ and $(G[i] = H[i])$
and $H$ is a stable marriage.
\end{lemma}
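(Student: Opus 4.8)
The plan is to argue by contradiction. Suppose $forbidden(G,i)$ holds and yet there is a stable marriage $H$ with $H \geq G$ and $H[i] = G[i]$; write $q = \rho(G,i) = \rho(H,i)$, so man $i$ is matched to woman $q$ in the proposal vector $H$ (he must be, since $H$ is a man-saturating matching in which no woman gets two proposals and $i$ proposes to $q$). By the definition of $forbidden(G,i)$, there is another man $j$ witnessing one of two cases: either (1) $j$ has also proposed to $q$ in $G$ and $q$ prefers $j$ to $i$, or (2) $(j, q)$ is a blocking pair in $G$. I would handle the two cases separately, and in each case produce a blocking pair for $H$, contradicting stability.

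In case (1), since $H \geq G$ we have $H[j] \geq G[j]$, so in $H$ man $j$ has advanced (weakly) down his list from the position at which he proposed to $q$; hence in $H$ man $j$ is matched to some woman he likes no more than $q$, i.e. $j$ weakly prefers $q$ to $\rho(H,j)$. If $\rho(H,j) = q$ this contradicts that $H$ is man-saturating (both $i$ and $j$ propose to $q$); otherwise $j$ strictly prefers $q$ to $\rho(H,j)$, and $q$ prefers $j$ to her partner $i$ in $H$, so $(j,q)$ blocks $H$. In case (2), $(j,q)$ is a blocking pair in $G$, so man $j$ prefers $q$ to $\rho(G,j)$ and $q$ prefers $j$ to every proposal she receives in $G$ — in particular $q$ prefers $j$ to $i$ (noting $i$ proposes to $q$ in $G$, since $\rho(G,i) = q$), so $q$ prefers $j$ to her $H$-partner $i$. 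The key point is to control man $j$'s partner in $H$: from $H \geq G$ we get $H[j] \geq G[j]$, so $j$ weakly prefers $\rho(G,j)$ to $\rho(H,j)$, and therefore $j$ strictly prefers $q$ to $\rho(H,j)$ (or $\rho(H,j) = q$, again contradicting man-saturation). Either way $(j,q)$ blocks $H$, contradiction.

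The main obstacle — really the only subtle point — is case (1): one must be careful that the witness $j$ in the man-saturation case might be matched to $q$ in $H$ rather than strictly worse off, and so the "blocking pair" one wants to exhibit degenerates into a direct violation of the man-saturating property; the argument must explicitly dispatch this sub-case. Everything else is a routine monotonicity observation: moving up in the proposal lattice can only worsen a man's assigned woman, so the inequality $H \geq G$ together with $H[i] = G[i]$ pins man $i$ at woman $q$ while giving every other man $j$ an assignment no better than in $G$, which is exactly what is needed to transfer the blocking witness from $G$ to $H$.
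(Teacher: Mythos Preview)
Your proof is correct and follows the same monotonicity idea as the paper's: the witness $j$ for $forbidden(G,i)$ transfers to $H$ because $H \geq G$ and $H[i] = G[i]$. The only difference is that the paper argues $forbidden(H,i)$ holds and then invokes Lemma~\ref{lem:forbidden}, whereas you unpack that invocation into an explicit blocking-pair/man-saturation case analysis; the underlying content is identical.
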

\begin{proof}
Consider any $H$ such that $(H \geq G)$ and $(G[i] = H[i])$. We show that $H$ is not a stable marriage.
The predicate $forbidden(G, i)$ implies that there exists a man $j$ such that $\rho(G, i)$ prefers man $j$ to man $i$ and
the proposal by $j$ to $\rho(G, i)$ is in a proposal vector less than or equal to $G$. Since $G \leq H$, and $G[i]$ equals $H[i]$,
we get that $forbidden(H,i)$ also holds. Hence, $H$ is not a stable marriage from Lemma \ref{lem:forbidden}.
\end{proof}

A consequence of Lemma \ref{lem:advance} is that if $forbidden(G, i)$ holds, then it is safe to advance man $i$ to the next choice without
any danger of missing a proposal vector that is a stable marriage. We can now show the correctness of Algorithm $\alpha$.
\begin{theorem}\label{thm:alpha}
Algorithm $\alpha$ returns the least stable proposal vector $G \geq I$ in the proposal lattice whenever it exists. If there is no stable proposal vector greater than or equal to $I$, then
the algorithm returns null.
\end{theorem}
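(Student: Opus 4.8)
The plan is to prove the theorem in three parts: termination, correctness of the ``return $G$'' case, and correctness of the two ``return null'' cases, with Lemma~\ref{lem:advance} doing the heavy lifting for the null case. First I would argue termination: each iteration of the while loop at line~(1) strictly increases $G[i]$ for some man $i$ and leaves all other components fixed, so $G$ strictly increases in the proposal lattice; since the lattice is finite (it has $w^m$ elements), the loop must terminate after at most $m(w-1)$ iterations, either by exiting the while condition or by hitting a return null. This also bounds the length of the generated sequence by $O(mw)$ (or $O(m^2)$ once one observes that no man ever needs to go past his $m$-th choice, given $numw \le m$).

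Next, suppose the loop exits normally at line~(6). Then no man is forbidden in the final $G$, and also $numw(G) \le m$ is maintained throughout (line~(0) establishes it and line~(3) guards it), so Lemma~\ref{lem:forbidden} applies and gives that $G$ is a stable marriage. Since the algorithm only ever increases components of $G$ starting from $I$, we have $G \ge I$. To see $G$ is the \emph{least} stable proposal vector $\ge I$, I would prove the invariant that is maintained at the top of every while-loop iteration: \emph{for every stable proposal vector $H$ with $H \ge I$, we have $H \ge G$}. This holds initially since $G = I$. For the inductive step, suppose it holds for the current $G$ and the loop picks a forbidden man $i$ and advances him. Take any stable $H \ge I$; by the invariant $H \ge G$. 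By Lemma~\ref{lem:advance}, since $forbidden(G,i)$ holds and $H$ is stable with $H \ge G$, we cannot have $H[i] = G[i]$, so $H[i] > G[i]$. Now line~(2) moves $G[i]$ to the \emph{next} choice $k'$ of man $i$ for which $i$'s proposal is the most preferred among all proposals $\rho(G',i') = \rho(G,i)'$-type proposals in vectors $\le G$ (i.e. the next choice that is not immediately forbidden for the obvious reason). The key sub-claim is that $H[i] \ge k'$: any value strictly between $G[i]$ and $k'$ corresponds to a woman $q$ already proposed to by a preferred man in some vector $\le G \le H$, so setting $H[i]$ to that value would make $(that\ man, q)$ a blocking pair or a double-proposal in $H$, contradicting stability of $H$. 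Hence $H[i] \ge k'$, and since the other components of $G$ are unchanged and $H \ge G$ componentwise already, the updated $G$ still satisfies $H \ge G$. This preserves the invariant, and at termination it yields that the returned $G$ is below every stable vector $\ge I$, i.e. it is the least one.

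Finally, the null cases. If line~(0) returns null, then $numw(I) > m$ and Lemma~\ref{lem:ubound} immediately gives that no stable marriage exists for any $G \ge I$. If line~(3) returns null, then at that point $forbidden(G,i)$ holds for the chosen man $i$ and either man $i$ has no admissible choice left after $G[i]$, or advancing him would push $numw$ above $m$. In the first case, by Lemma~\ref{lem:advance} no stable $H \ge G$ can have $H[i] = G[i]$, yet $i$ has no larger choice available, so there is no stable $H \ge G$ at all; combined with the loop invariant (every stable $H \ge I$ satisfies $H \ge G$) this shows no stable marriage $\ge I$ exists. In the second case, moving man $i$ to his next admissible woman $q$ would force $numw(G') > m$ for the resulting $G'$, and since any stable $H \ge I$ satisfies $H \ge G$ and (by the argument above) $H[i] \ge$ that next choice, we get $numw(H) > m$, so by Lemma~\ref{lem:ubound} $H$ cannot be stable --- contradiction. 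Hence again no stable marriage $\ge I$ exists.

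The main obstacle I anticipate is the key sub-claim in the second paragraph --- namely that when man $i$ is advanced from $G[i]$ to the next ``non-forbidden'' choice $k'$, every stable $H \ge G$ must in fact have $H[i] \ge k'$ and not merely $H[i] > G[i]$. This requires carefully unpacking exactly which choices line~(2) skips over and showing each skipped choice is incompatible with stability of any $H \ge G$, using that the offending preferred proposals to those intermediate women already exist in vectors $\le G$ and hence persist in $H$. Getting the bookkeeping of ``most preferred proposal among vectors $\le G$'' exactly right, including the $numw$ interaction, is the delicate part; the rest is a routine induction.
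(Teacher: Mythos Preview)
Your proof is correct and follows essentially the same approach as the paper: both use Lemma~\ref{lem:advance} to maintain that the current $G$ stays below the least (equivalently, every) stable marriage $\ge I$, and both appeal to Lemma~\ref{lem:forbidden} at termination. Your version is in fact more careful than the paper's in two places --- you explicitly justify the multi-step jump of line~(2) via your ``key sub-claim'' (the paper just writes ``the advancement along $i$ guarantees $G \le H$''), and you separately treat the $numw$ branch of the null return at line~(3), which the paper's proof omits --- and you avoid invoking the sublattice property by quantifying over all stable $H \ge I$ rather than fixing the least one up front.
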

\begin{proof}
First suppose that a stable marriage exists that is greater than or equal to $I$. Since the set of stable marriages form a sublattice of the proposal lattice  \cite{knuth1997stable}, there
exists $H$, the least proposal vector that is a stable marriage and greater than or equal to $I$. Consider any $I \leq G < H$. By definition of $H$,
$G$ is not a stable marriage and there exists $i$ such that $forbidden(G, i)$ due to Lemma \ref{lem:forbidden}. From Lemma \ref{lem:advance},
the advancement along $i$ guarantees that $G \leq H$.
Hence, the algorithm will continue to advance $G$ until it is identical to $H$.

Now suppose that there is no stable marriage that is greater than or equal to $I$. In this case, the algorithm will continue to find $i$ such that
$forbidden(G, i)$ until some man runs out of choices. If we run out of choices, then from repeated application of Lemma \ref{lem:advance}, there is no stable marriage which is greater than or equal to $I$.
%
\end{proof}



The following Corollary states that the stable marriage returned by Algorithm $\alpha$ has the least distance of all stable marriages greater than $I$.
\begin{corollary}
Given any proposal vector $I$, Algorithm $\alpha$ returns the stable marriage greater than or equal to $I$ with the least distance from $I$.
\end{corollary}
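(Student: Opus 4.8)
The plan is to reduce the claim entirely to Theorem~\ref{thm:alpha} together with an elementary monotonicity property of the $L_1$ distance on the proposal lattice. By Theorem~\ref{thm:alpha}, whenever some stable proposal vector greater than or equal to $I$ exists, Algorithm~$\alpha$ returns $H$, the \emph{least} such vector in the proposal lattice; and if no stable vector $\geq I$ exists, the algorithm returns null and there is nothing to prove (the set over which we minimize is empty). So I would first dispose of the vacuous case, and then assume a stable vector $\geq I$ exists and let $H$ be the output of Algorithm~$\alpha$.

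Next I would fix an arbitrary stable proposal vector $H'$ with $H' \geq I$ and show $dist(I,H) \leq dist(I,H')$. Since the stable marriages form a sublattice and $H$ is by Theorem~\ref{thm:alpha} the least stable vector above $I$, we have $I \leq H \leq H'$, where all inequalities are componentwise (the first because $H$ was obtained by an upward traversal from $I$, the second by minimality of $H$). Because $I \leq H$ and $I \leq H'$, every coordinate difference is nonnegative, so $dist(I,H) = \|H - I\|_1 = \sum_{i=1}^m \big(H[i] - I[i]\big)$ and similarly $dist(I,H') = \sum_{i=1}^m \big(H'[i] - I[i]\big)$. Now $H \leq H'$ gives $H[i] - I[i] \leq H'[i] - I[i]$ for every man $i$, and summing over $i$ yields $dist(I,H) \leq dist(I,H')$. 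Since $H'$ was arbitrary among stable vectors $\geq I$, the vector $H$ returned by Algorithm~$\alpha$ has the least distance from $I$ of all such vectors.

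I do not expect any real obstacle here: the only subtlety is making sure the two reductions are clean — namely that ``$H$ is the least stable vector $\geq I$'' is exactly what Theorem~\ref{thm:alpha} supplies, and that the $L_1$ distance collapses to a signed sum precisely because everything in sight dominates $I$ coordinatewise. One could additionally remark that the case $numw(I) > m$ is subsumed by Lemma~\ref{lem:ubound}, which already guarantees no stable vector $\geq I$ exists, so the null-return branch is consistent with the vacuous case above. Beyond that, the argument is a one-line monotonicity computation once Theorem~\ref{thm:alpha} is invoked.
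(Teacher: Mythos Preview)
Your proposal is correct and follows essentially the same approach as the paper: invoke Theorem~\ref{thm:alpha} to obtain $I \leq H \leq H'$ for any stable $H' \geq I$, then observe that this chain immediately forces $dist(I,H) \leq dist(I,H')$. The paper's proof is simply a terser version of yours, omitting the explicit expansion of the $L_1$ sum and the vacuous case, but the idea is identical.
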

\begin{proof}
Suppose that Algorithm $\alpha$ returns $G$ and $G'$ is any other stable marriage such that $I \leq G'$.
From Theorem \ref{thm:alpha}, we get that $I \leq G \leq G'$. It follows that the distance between $I$ and $G$ is less than or equal to the 
distance between $I$ and $G'$.
\end{proof}

As another application of Algorithm $\alpha$ consider a scenario where we have a stable marriage and a new man joins the system (we can assume that 
initially the number of women were more than the number of men). 
Instead of running the GS from scratch, algorithm $\alpha$ can start from the
existing proposal vector for existing men and the median choice for the new entrant. If the existing matching had certain desirable properties (e.g. fairness), then
the new stable matching found would be close to the existing matching.

%

\subsection{An Efficient Implementation of Algorithm $\alpha$}
In this section, we give an $O(m^2+w)$ implementation of Algorithm $\alpha$.
The GS algorithm maintains two data structures.
First, it maintains a list of men who are unmatched and
must advance on their preference list. We also maintain
$mList$, a list of men that are forbidden in the current proposal vector.
Second, the GS algorithm maintains an array $partner$ such that $partner[q]$ returns the partner for
woman $q$ in the proposal vector $G$. 
We maintain an array $curBest$ instead of $partner$ such that $curBest[q]$ returns the most
preferred match for woman $q$ among all proposals such that the proposal to woman $q$
by man $i$ is for a choice less than or equal to $G[i]$.
Note that if the most  preferred match for $q$ is before
$G$, then $partner[q]$ may be zero even though $curBest[q]$ is nonzero.
Such a scenario cannot happen in the GS algorithm because a man can advance on his list
only when his current matched woman gets a better proposal.
For example, consider the men's preference lists in Fig. \ref{fig:SMP2}.
Suppose the initial vector is $(2,2,1)$. In this proposal vector, $curBest[2]$ (the current best proposal to $w_2$) is from $m_2$. However, $m_2$ has moved on to 
the next woman on his list $w_3$. Hence, $m_2$ is not the partner of $w_2$.

\begin{figure}
\fbox{\begin{minipage}  {\textwidth}\sf
{\bf input}: Stable marriage instance $mpref$ and $wrank$, an initial proposal vector $I$\\
{\bf output}: the smallest stable marriage greater than or equal to $I$ (if one exists)\\
\\
$mList$: list of men initially empty; // men who need to advance\\
$G$: array[$1..m]$ of $1..w$;\\
$curBest$:array[$1..w$] of $0..m$ initially $0$; // current best proposal until $G$\\
$numw$: int initially $0$; // number of women who have received proposals until $G$\\
\\
$G$ := $I$;\\
if ($\exists i: G[i] > m$)  {\bf return}  null;  // "no stable matching exists"\\
\\
// Step 1: initialize $curBest$\\
{\bf for} $i \in [1..m]$ do\\
\h {\bf for} $k \in [1..G[i]]$ do\\
\h \h $q := mpref[i][k]$;\\
\h \h  if $(curBest[q] =0)$ then // first proposal to $q$ encountered\\
\h \h \h $curBest[q] := i$; \\
\h\h\h $numw := numw+1$; \\
\h \h \h if ($numw > m$) {\bf return}  null;  // "no stable matching exists"\\
\h \h  else if $(wrank[q][i] < wrank[q][curBest[q]])$ then\\
\h \h \h $curBest[q] := i$; \\
\\
// Step 2: initialize $mList$\\
{\bf for} $i \in [1..m]$ do\\
\h  $q := \rho(G,i)$;\\
\h if $(curBest[q] \neq i)$ then \\
\h \h append $i$ to $mList$;\\
\\
// Step 3: Advance on elements from $mList$\\
{\bf while} $(mList \neq \{\})$ \\
\h $i$ :=  first element in $mList$;\\
\h if ($G[i] < m$)) $G[i] := G[i]+1$; // try the next choice \\
\h else {\bf return}  null;  // "no stable matching exists"\\ 
\h $q := \rho(G,i)$; // woman for that choice number\\
\h  if $(curBest[q] =0)$ then\\
\h \h   $numw := numw + 1$;\\
\h\h   if ($numw > m$) then {\bf return}  null;  // "no stable matching exists"\\
\h \h $curBest[q] := i$; \\
\h \h  remove $i$ from $mList$;\\
\h else if $(wrank[q][i] < wrank[q][curBest[q]])$ then\\
\h \h if $(\rho(G,curBest[q]) = q)$ then append $curBest[q]$ to $mList$;\\
\h \h  $curBest[q] := i$; \\
\h \h  remove $i$ from $mList$;\\
{\bf endwhile};\\
{\bf return} $G$;
\end{minipage}
} 
\caption{An Implementation of Algorithm $\alpha$ with $O(m^2+w)$ complexity. \label{fig:impl}}
\end{figure}


The implementation shown in Fig. \ref{fig:impl} takes as input a stable marriage instance
specified by $mpref$ (the men's preferences) and $wrank$ (the women's ranking of men), and 
the initial proposal vector $I$. Step 1 goes over all proposals made in $G$ or before $G$ and
computes the current best proposal for every woman $q$. 
It also counts the number of unique women who have received proposals. 
If that number exceeds $m$, the algorithm returns null.

Step 2 goes over all men whose
proposals are not the current best proposals and inserts them in $mList$.
All man $i$ in $mList$ are such that they satisfy $forbidden(G, i)$.

Step 3 advances $G$ over proposals that are forbidden. It takes out
an element  $i$ from $mList$. It advances to the next choice of woman $q$.
If $q$ does not have any proposal and proposing to her does not increase $numw$ beyond $m$, 
or if $i$ is preferred over the current best proposal for $q$ so far, 
we have succeeded in removing $i$ from $mList$. If $q$ had a partner, then that man is added to $mList$.
If we run out of choices for any man $i$, then
the algorithm returns null. If the while loop terminates, we have that there is 
no $i$ such that $forbidden(G, i)$ and therefore $G$ is a stable marriage and it is returned.

The correctness of the implementation easily follows from Lemma \ref{lem:invariants} and Lemma \ref{lem:advance}.
\begin{lemma}\label{lem:invariants}
The while loop in Fig. \ref{fig:impl}  satisfies the following invariants.
\begin{enumerate}
\item
For all $q$: $curBest[q]$ is the highest ranked proposer to $q$ who has made proposal to $q$ in any proposal vector less than or equal to $G$.
\item
For all $i$: $forbidden(G, i)$ iff $curBest[\rho(G,i)] \neq i$.
\item
$i \in mList$ iff $curBest[\rho(G,i)] \neq i$.
\end{enumerate}
\end{lemma}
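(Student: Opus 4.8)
The plan is to prove invariants~1 and~3 together by induction on the number of completed iterations of the while loop, and then to get invariant~2 for free: I claim invariant~2 holds for \emph{every} proposal vector $G$ once invariant~1 does, just by unwinding the definition of $forbidden$, so no separate induction is needed for it.

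To see the claim, fix $i$ and put $q=\rho(G,i)$. Suppose $curBest[q]=j\neq i$. By invariant~1, $j$ has a proposal to $q$ in some vector $\le G$, so $q$ sits at position $\le G[j]$ in $j$'s list, and $q$ prefers $j$ to $i$. If that position is exactly $G[j]$ then $\rho(G,j)=q=\rho(G,i)$ with $q$ preferring $j$, i.e. clause~(1) of $forbidden$; otherwise $j$ prefers $q$ to $\rho(G,j)$, and since invariant~1 makes $q$ prefer $j$ to every man who proposes to her in $G$, the pair $(j,q)=(j,\rho(G,i))$ blocks $G$, i.e. clause~(2). Conversely, a clause~(1) witness $j$ proposes to $q$ in $G$ and is preferred by $q$ to $i$, so $curBest[q]\neq i$; a clause~(2) witness $j$ prefers $q$ to $\rho(G,j)$, hence $q$ precedes $\rho(G,j)$ in $j$'s list, giving a vector $\le G$ in which $j$ proposes to $q$ and is preferred by $q$ to $i$, so again $curBest[q]\neq i$.

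Now the induction. The base case is the state after Steps~1 and~2: Step~1 scans, for each man $i$, exactly his choices $1,\dots,G[i]$ (his proposals occurring in some vector $\le G=I$), keeping the $wrank$-minimal proposer for each woman, so invariant~1 holds; Step~2 appends $i$ to $mList$ iff $curBest[\rho(G,i)]\neq i$, so invariant~3 holds. For the inductive step, assume the invariants at the top of an iteration, pop $i$ (the first element of $mList$) --- which is forbidden by invariants~3 and~2, so by Lemma~\ref{lem:advance} advancing him is safe --- and let $q_{\mathit{old}}$, $q$ be $\rho(G,i)$ before and after $G[i]:=G[i]+1$. The key point is monotonicity of the down-set: incrementing $G[i]$ only enlarges $\{H:H\le G\}$ (the vector agreeing with the new $G$ but sending $i$ back to $q_{\mathit{old}}$ is still $\le G$), so every entry of $curBest$ except possibly $curBest[q]$ stays correct; in particular $curBest[q_{\mathit{old}}]$ is unchanged, which is exactly why men forbidden ``through'' $q_{\mathit{old}}$ remain forbidden. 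I then check the three code branches on $q$ --- $curBest[q]=0$ (set it to $i$, remove $i$ from $mList$); $curBest[q]\neq0$ with $q$ preferring her current $curBest[q]$ to $i$ (do nothing, keep $i$ in $mList$); $curBest[q]\neq0$ with $q$ preferring $i$ (append the displaced man $curBest[q]$ to $mList$ iff he is currently proposing to $q$, set $curBest[q]:=i$, remove $i$) --- verifying in each that invariant~1 is restored for $q$ and untouched elsewhere, and that the $mList$ edits re-establish invariant~3 for $i$, for the displaced man, and (since no other $\rho(G,\cdot)$ value and no other $curBest$ entry moves) for every other man. The null-returning sub-cases ($i$ exhausts his list, or $numw$ would exceed $m$) terminate the procedure, leaving the invariants vacuous.

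The step I expect to be the real obstacle is the inductive-step check that \emph{no man other than $i$ and the possibly-displaced man $curBest[q]$ changes forbidden-status}. That is exactly where the design choice of tracking $curBest$ over the whole down-set of $G$, rather than over $G$ alone, does the work: advancing $i$ past $q_{\mathit{old}}$ must not spuriously un-forbid anyone who was forbidden because of $i$'s old proposal, and it does not, precisely because $curBest[q_{\mathit{old}}]$ is unchanged. Everything else is a mechanical unwinding of the definitions.
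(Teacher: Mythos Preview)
Your proposal is correct and follows essentially the same approach as the paper: establish invariant~1 via Step~1 and maintain it under the increment of $G[i]$, derive invariant~2 directly from invariant~1 by unwinding the definition of $forbidden$ (the paper does this too, just without flagging that no induction is needed), and establish invariant~3 via Step~2 and maintain it through the $mList$ edits. Your treatment is in fact more careful than the paper's---the explicit monotonicity observation that $curBest[q_{\mathit{old}}]$ is unchanged when $i$ advances, and the three-branch check for invariant~3, fill in details the paper leaves implicit---but the route is the same.
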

\begin{proof}
\begin{enumerate}
\item 
Step 1 establishes the invariant for $G$ equal to $I$.
In the while loop of Step 3, whenever $G[i]$ is incremented such that woman $q$ is proposed by $i$,
the invariant is maintained by updating $curBest[q]$.
\item
Let $q$ be equal to $\rho(G, i)$.
We first show that $curBest[q] \neq i$ implies $forbidden(G, i)$.
 If $q$ has another proposal in $G$ and $curBest[q]$ is not equal to $i$,
then man $i$ is unmatched in $G$ and therefore $forbidden(G, i)$ holds. If $q$ does not have
any other proposal in $G$, then $curBest[q] \neq i$ implies $forbidden(G, i$.

Conversely, if $forbidden(G,i)$ then either $i$ is unmatched in $G$ or her match in $G$ is part of a
blocking pair. In either case, there exists $j \neq i$ such that $q$ prefers $j$ to $i$ and
$j$ has a proposal in $G$ or before $G$.
Hence, $curBest[q] \neq i$. 
\item
Step 2 establishes the invariant at the beginning of the while loop.
Any man $i$ is removed from $mList$ only when $curBest[\rho(G, i)] = i$.
\end{enumerate}
\end{proof}

$G$ is advanced on index $i$ only when $forbidden(G,i)$. Lemma \ref{lem:invariants} 
proves that this holds iff $i \in mList$. When $mList$ is empty, none of the components of $G$ are forbidden and it corresponds to 
a stable marriage. If it is not possible to advance on some forbidden index $i$ or if more than $m$ woman have been proposed
before $G$, the algorithm returns null in accordance with Lemma \ref{lem:ubound} and Lemma \ref{lem:advance}.

Let us analyze the time complexity of Algorithm $\alpha$.
Initialization takes $O(m+w)$ time. 
Step 1 takes $O(m^2)$ time because $G[i] \leq m$. Step 2 takes $O(m)$ time to compute $mList$.
Step 3 increases $G[i]$ for some $i$ in every iteration. Each iteration can be done in $O(1)$ time and no $G[i]$ can exceed $m$.
Thus, the algorithm takes $O(m^2+w)$ time which reduces to the standard $O(n^2)$ time complexity of the GS algorithm when both $m$ and $w$ are equal to $n$.
The algorithm does not process more than $m$ choices for any man even if the number of women exceeds the number of men. This is sufficient
because there cannot exist any stable marriage that includes a choice beyond the choice number $m$ for any man.

\section{Algorithm $\beta$: Downward Traversal}

We now give the dual of Algorithm $\alpha$ that does the downward traversal in the proposal vector lattice and 
returns the greatest stable marriage less than or equal to $I$.
In the standard literature, one does not consider the dual of the GS algorithm to find the woman-optimal stable marriage.
Just by switching roles of men and women from the man-optimal GS, we get the woman-optimal GS algorithm.
We cannot employ this strategy because
we had assumed that the number of men is less than or equal to the number of women. Switching men and women
violates this assumption.
\begin{figure}
\vspace*{-0.1in}
\fbox{\begin{minipage}  {\textwidth}\sf
{\bf input}: A stable marriage instance, initial proposal vector $I$\\
{\bf output}: greatest stable marriage less than or equal to $I$ if one exists\\
\\
 The predicate $rForbidden(G, i)$ holds if there exists a woman $q$ such that 
 $q$ prefers $i$ to all her proposals in any vector less than or equal to $G$, and $i$ prefers $q$ to $\rho(G, i)$.\\
 $L$: proposal vector corresponding to man optimal stable marriage.\\
%
%
(1) for all $i$:  $G[i]:= \min(m, I[i])$;\\
(2) if $(\exists i: G[i] < L[i])$ {\bf return}  null; // no stable matching exists \\
(3) {\bf while} (there exists a man $i$ such that $rForbidden(G, i)$)\\
(4)  \h     $G[i]$:= rank of woman $q$ s.t. $q$ prefers $i$ of all proposers in any vector less than $G$\\
(5) {\bf endwhile}; \\
(6) return $G$;
 \end{minipage}
} 
  \caption{Algorithm $\beta$: An Algorithm that returns the woman-optimal marriage less than or equal to the given proposal vector $I$.\label{fig:beta}}
\end{figure}

In addition, the downward traversal of the proposal lattice gives different insights into the algorithm for finding a stable matching
even when the number of men equals the number of women.


We first give a necessary condition for a stable marriage to exist that is less than or equal to $I$.
\begin{lemma}
If $numw(I)$ (the number of unique women who have proposals in any vector less than or equal to $I$) is less than $m$,
then there cannot be any stable proposal vector less than or equal to $I$.
\end{lemma}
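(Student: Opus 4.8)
The plan is to prove the contrapositive in spirit: I would show directly that if a stable proposal vector $G \le I$ existed, then $numw(I) \ge m$, which gives the desired statement. Recall that a stable proposal vector is by definition a man-saturating matching, so in $G$ every man is matched and, since no woman receives more than one proposal, the $m$ men are matched to $m$ \emph{distinct} women. Thus $G$ itself witnesses $m$ distinct women receiving proposals.

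Next I would connect this to $numw(I)$. Since $G \le I$, every woman who receives a proposal in $G$ is counted in the set $\{j \in [w] : \exists H \le I, \exists i, \rho(H,i)=j\}$ defining $numw(I)$ — indeed, take $H = G$ in that quantifier. Hence the $m$ distinct women matched in $G$ all lie in this set, so $numw(I) \ge m$. Contrapositively, if $numw(I) < m$, there is no stable proposal vector less than or equal to $I$, which is exactly the claim.

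The main obstacle, such as it is, lies in being careful about the degenerate cases: what if $I$ itself is not man-saturating, or assigns some man a choice beyond position $m$? This does not actually cause trouble — the argument only uses the existence of a hypothetical stable $G \le I$ and the definition of $numw$, neither of which presupposes anything about $I$ being well-behaved. One should also note the asymmetry with Lemma~\ref{lem:ubound}: there the bound $numw(I) > m$ rules out stable vectors \emph{above} $I$ (too many women have been "used up"), whereas here $numw(I) < m$ rules out stable vectors \emph{below} $I$ (not enough women are reachable to saturate the men). I would make this remark explicit so the reader sees the two lemmas as the upper and lower guards on the feasible range, matching the roles of Algorithm $\alpha$ and Algorithm $\beta$.
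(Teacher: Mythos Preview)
Your proof is correct and follows essentially the same reasoning as the paper's: both arguments hinge on the observation that any $G \le I$ can only propose to women counted in $numw(I)$, so if $numw(I) < m$ no such $G$ can be man-saturating (and hence not stable). The paper states this directly in a single sentence, while you phrase it as a contrapositive with more detail, but the underlying idea is identical.
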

\begin{proof}
The claim follows because any proposal vector less than or equal to $I$ cannot be a man-saturating matching if the number of unique women is less than $m$.
\end{proof}

If $numw(I) \geq m$, there may or may not be a proposal vector that corresponds to a stable marriage depending upon the
women's preferences. 

While traversing the proposal lattice in the downward direction, we use the predicate 
$rForbidden(G, p)$ (short for reverse-Forbidden) which uses the notion of
{\em best blocking pair}.
\begin{definition}[Best blocking pair]
A blocking pair $(p,q)$ is the {\em best blocking pair} in $G$ for $q$ if for all blocking pairs $(p',q)$ in $G$, the woman $q$ prefers $p$ to $p'$.
\end{definition}
\begin{definition}[rForbidden]
A man $p$ is {\em rForbidden} in $G$ if there exists a woman $q$ such that 
$(p,q)$ is the best blocking pair in $G$ for $q$.
\end{definition}

We first show that 
\begin{lemma}\label{lem:rforbidden}
Let $G$ be any proposal vector such that $numw(G) \geq m$.
There exists a man $i$ such that $rForbidden(G, i)$ iff $G$ is not a stable marriage.
\end{lemma}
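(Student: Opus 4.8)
The plan is to mirror the proof of Lemma~\ref{lem:forbidden} almost verbatim, with the notion ``best blocking pair'' replacing ``forbidden man'' and the hypothesis $numw(G)\ge m$ replacing $numw(G)\le m$.

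First I would dispatch the easy direction: if $rForbidden(G,i)$ holds then, by definition, some woman $q$ has $(i,q)$ as her best blocking pair in $G$; in particular $(i,q)$ is a blocking pair, so $G$ is not a stable marriage. This uses nothing about $numw$.

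For the converse I would assume $G$ is not a stable marriage and distinguish the two ways this can fail. If $G$ is man-saturating but unstable, it has a blocking pair $(p,q)$; among the finite, nonempty set of blocking pairs of the form $(p',q)$, pick the one whose man $q$ most prefers, call it $(p^\star,q)$. By construction $(p^\star,q)$ is $q$'s best blocking pair, so $rForbidden(G,p^\star)$ holds.

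The remaining case, $G$ not man-saturating, is the only place the hypothesis is used and the step I expect to need the most care: here two distinct men propose to the same woman in $G$, so at most $m-1$ distinct women receive a proposal in $G$, and since $numw(G)\ge m>m-1$ there is a woman $q$ who receives a proposal in some vector $H\le G$ but none in $G$. Choosing a man $p$ with $\rho(H,p)=q$, the componentwise inequality $H[p]\le G[p]$ forces $p$ to prefer $q$ to $\rho(G,p)$ (with $\rho(G,p)\ne q$ since $q$ is unproposed in $G$), while $q$, being unmatched in $G$, prefers $p$ to staying single because lists are complete; hence $(p,q)$ is a blocking pair in $G$, and I would then conclude as in the previous case by taking $q$'s most preferred blocking partner. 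The bulk of the argument is this counting step; everything else is a direct transcription of the proof of Lemma~\ref{lem:forbidden}, and the forward implication is trivial.
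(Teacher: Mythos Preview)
Your proposal is correct and follows essentially the same approach as the paper: the forward direction is immediate from the definition, and for the converse you split into the man-saturating case (pick a best blocking pair) and the non-man-saturating case (use $numw(G)\ge m$ via a pigeonhole count to find an unproposed woman $q$, exhibit a blocking pair for her, then take her best blocking partner). The paper's proof is terser---in the non-man-saturating case it directly takes the most favorable proposer to $q$ among vectors $\le G$ and asserts $rForbidden$ without spelling out the blocking-pair verification---but the logical content is identical.
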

\begin{proof}
First suppose that there exists $i$ such that $rForbidden(G, i)$. Then, there exists a woman $q$ such that man $i$ and woman $q$ form a
blocking pair for $G$. Hence $G$ is not a stable marriage.

Conversely, assume that $G$ is not a stable marriage.
If $G$ is not a man-saturating matching, then there exists at least one woman $q$
who has not received any proposal in $G$ but has received it earlier because
$num(W) \geq m$. Of all such proposals to $q$ let the most favorable proposal be from man $i$. Then, $rForbidden(G, i)$ holds.
If $G$ is a man-saturating matching, but not stable, then there exists at least one blocking
pair. Therefore, there exists at least one best blocking pair.

\end{proof}

Analogous to upward traversal using forbidden predicate, we get that 
\begin{lemma}\label{lem:downward}
Assume $numw(G) \geq m$.
If $rForbidden(G, i)$ holds, then there is no stable proposal vector $H$ such that $(H \leq G)$ and $(G[i] = H[i])$.
\end{lemma}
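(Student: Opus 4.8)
The plan is to mirror the proof of Lemma~\ref{lem:advance}, using the fact that the set of vectors below a given vector is downward closed. Fix a man $i$ with $rForbidden(G,i)$, and let $q$ be the witnessing woman, so $(i,q)$ is the best blocking pair for $q$ in $G$: in particular $i$ prefers $q$ to $\rho(G,i)$, and $q$ prefers $i$ to every man who proposes to her in $G$ as well as to her partner (or to staying single) in $G$. The first thing I would record is a slightly stronger fact: $q$ prefers $i$ to \emph{every} man who has proposed to her in \emph{any} vector $G' \le G$. This needs a short case split: if some $p$ proposed to $q$ in $G' \le G$ and $q$ preferred $p$ to $i$, then since $G'[p] \le G[p]$ the woman $q$ is weakly preferred by $p$ over $\rho(G,p)$; if strictly, $(p,q)$ is a blocking pair in $G$ (using that $(i,q)$ blocks $G$, so $q$ also prefers $p$ to her $G$-situation), contradicting that $(i,q)$ is $q$'s best blocking pair in $G$; if $\rho(G,p)=q$, then either $q$ is matched to $p$ in $G$ (contradicting that $(i,q)$ blocks $G$) or $q$'s $G$-partner $p''$ is preferred by $q$ over $p$, and since $q$ prefers $i$ to $p''$ she prefers $i$ to $p$ — contradiction either way.

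Now take any $H \le G$ with $H[i]=G[i]$; I want to exhibit a blocking pair in $H$, directly showing $H$ is not stable. Since $H[i]=G[i]$ we have $\rho(H,i)=\rho(G,i)$, so $i$ prefers $q$ to $\rho(H,i)$. Next I claim $q$ prefers $i$ to every man $p$ with $\rho(H,p)=q$: such a $p$ proposes to $q$ in the vector $H$, and $H \le G$, so $p$ proposes to $q$ in a vector $\le G$, whence by the recorded fact $q$ prefers $i$ to $p$. Therefore $q$ prefers $i$ to her partner in $H$ (if she has one) and to remaining single (lists are complete and $q$ finds $i$ acceptable, since $(i,q)$ already blocked $G$). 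Hence $(i,q)$ is a blocking pair in $H$, so $H$ is not a stable marriage. This simultaneously covers the case $w>m$ in which $q$ may be unmatched in $H$, and — as a stable $H$ would in particular be man-saturating — the case where $H$ fails to be man-saturating.

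The step I expect to be the crux is the claim in the second paragraph: one must rule out that passing to the \emph{smaller} vector $H$ hands $q$ a proposal from a man she ranks above $i$, which would kill the blocking pair. The reason this cannot happen is exactly the downward closure $\{K : K \le H\} \subseteq \{K : K \le G\}$ combined with the ``best blocking pair'' clause in $rForbidden$ (equivalently, with the reformulation used in Algorithm~$\beta$ that $q$ prefers $i$ to all her proposals in every vector $\le G$): any proposer to $q$ that is available at $H$ was already available at or below $G$, and $rForbidden(G,i)$ guarantees $q$ placed $i$ above all of them. The standing hypothesis $numw(G) \ge m$ is what ties $rForbidden$ to non-stability in Lemma~\ref{lem:rforbidden} and is inherited here because a stable $H$ automatically has $numw(H)\ge m$; it is not otherwise needed for this particular argument.
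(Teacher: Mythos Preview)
Your proof is correct and follows the same approach as the paper: show that $(i,q)$ remains a blocking pair in any $H \le G$ with $H[i]=G[i]$. The only difference is one of starting point. The paper's proof simply invokes the algorithmic formulation of $rForbidden$ from Fig.~\ref{fig:beta} (``$q$ prefers $i$ to all her proposals in any vector $\le G$'') as if it were the definition, and then the argument is two lines. You instead start from the text's Definition of $rForbidden$ via best blocking pairs and spend your first paragraph deriving that same property, which the paper never explicitly justifies. So your write-up is more complete, but the underlying idea is identical; your Case~B could be shortened by noting directly that if $\rho(G,p)=q$ then $p$ is a proposer to $q$ in $G$, and since $(i,q)$ blocks $G$, the woman $q$ already prefers $i$ to every such proposer.
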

\begin{proof}
If $rForbidden(G, i)$ holds, there exists a woman $q$ such that $q$ prefers $i$ to all men who have proposed to $q$ until $G$.
First suppose that $H$ does not have any proposal to $q$. Then, $H$ cannot be stable because $q$ is single and man $i$ prefers
$q$ to $\rho(H, i)$.
Now suppose that $H$ has a proposal to $q$. Since $H \leq G$, we know that any proposal to $q$ in $H$ is less preferable to
that by man $i$. Hence, $(i,q)$ continues to be a blocking pair in $H$.
\end{proof}

Fig. \ref{fig:beta} shows a high-level description of a downward traversal of the proposal lattice. 
At line (1), we ensure that $G[i]$ is at most $m$ because due to Lemma \ref{lem:ubound}, we know that
there cannot be any stable marriage in which any component exceeds $m$.
At line (2), we first ensure that $G$ is at least equal to $L$, the proposal vector corresponding to the man-optimal 
stable marriage.
Otherwise, there
cannot be a stable marriage vector less than or equal to $G$.
At line (3) we pick $i$ such that $rForbidden(G, i)$ holds.
This means that there exists a woman $q$ such that $(i,q)$ is a best blocking pair.
At line (4), we satisfy the pair $(i,q)$ by decreasing $G[i]$ until $\rho(G, i) = q$.
This step corresponds to a downward traversal in the proposal lattice.
At line (6),  when we exit from the while loop, we know that $G$ must be a stable
marriage on account of Lemma \ref{lem:rforbidden}.
This algorithm ensures that the match for any man can only improve.


For example of Algorithm $\beta$, consider the initial proposal vector $G=(2,2,2)$. The pair $(m_2, w_2)$ is blocking. Of all the blocking pairs in $G$ for $w_2$, 
$m_2$ is best. Even though $w_2$ prefers $m_3$ to $m_2$, the pair $(m_3, w_2)$ is not blocking because $m_3$ is at his choice $2$ in $G$ and $w_2$
corresponds to his third choice. Since $m_2$ is the best blocking pair for $w_2$, we make $m_2$ propose to $w_2$. Hence, the new proposal vector is
$(2,1,2)$. In this proposal vector, $w_3$ is unmatched and $(m_3, w_3)$ is the best blocking pair for $w_3$. The new proposal vector is $(2,1,1)$. 
Now, $w_1$ is unmatched and her best blocking pair is $(m_1, w_1)$. When $m_1$ proposes to $w_1$, we get the stable marriage proposal vector $(1, 1, 1)$.
This corresponds to the man-optimal stable marriage.

\subsection{An Efficient Implementation of Algorithm $\beta$}

We now discuss an efficient implementation of the downward traversal shown in Fig. \ref{fig:beta-impl}.
The implementation uses the following data structures. The variable $wchoice$ is the rank of the best proposal that a woman
may have in any vector less than or equal to $G$. If $wchoice[q] = k$, then the woman $q$ has a proposal from
her $k^{th}$ top choice in some vector less than or equal to $G$.
The variable $wList$ is the list of all women who may be part of some blocking pair in $G$.
When $wList$ becomes empty we know that $G$ is stable.
The variable $mrank[p][q]$ stores the rank of woman $q$ for man $p$. If $mrank[p][q]$ equals $1$, then $q$ is the top choice for man $p$.

\begin{figure}
\fbox{\begin{minipage}  {\textwidth}\sf
{\bf input}:  $mpref, wpref$: men and women preferences;\\
\h   $mrank$: men ranking of women,\\
\h $I$:  initial proposal vector;\\
{\bf output}: the greatest stable marriage less than or equal to $I$ (if one exists)\\
\\
$wList$: list of women; // women in blocking pairs\\
$wchoice$: array[$1..w]$ of $1..m$ ;\\
$G$: array[$1..m]$ of $1..w$;
\\
// Step 1: Set $G$ to $I$ ensuring that it does not exceed $m$\\
\h \h  $G[i] := \min(I[i], m)$;\\
\\
// Step 2: Trim $G$ so that proposals are only to women in man-optimal marriage $L$\\
\h  $L := \alpha((1,1,\ldots,1))$;\\
\h  if $(\exists i: G[i] < L[i])$ {\bf return}  null;  // "no stable matching less than or equal to $G$ exists"\\
\h $W'$ := set of women matched in $L$;\\
\h {\bf for} $i \in [1..m]$ do\\
\h \h $k := L[i]$;\\
\h \h {\bf while} $(k < G[i]) \wedge (mpref[i][k+1] \in W')$\\
\h \h \h  $k := k+1$;\\
\h \h   $G[i] := k$;\\
\\
// Step 3: Satisfy all potential blocking pairs \\
\h $wList := W'$;\\
\h {\bf for} $j \in W': wchoice[j] := 1$;\\
\h {\bf while} $(wList \neq \{\})$ \\
\h \h $q$ :=  first element in $wList$;\\
\h \h $p := wpref[q][wchoice[q]]$;\\
\h \h  if   $(mrank[p][q] = G[p])$ then //q is assigned to her best choice \\
\h \h \h   remove $q$ from $wList$;\\
\h \h else if $mrank[p][q] > G[p]$ then// this man has not proposed to $q$ until $G$\\
\h\h\h$wchoice[q]++$ ;\\
\h\h  else //  $(mrank[p][q] < G[p])$ \\
\h \h \h  $wList : = wList \cup \rho(G, p)$; // add the current woman partner for $p$ to $wList$\\
\h \h \h $G[p] := mrank[p][q]$; // satisfy the blocking pair $(p,q)$\\
\h \h \h   remove $q$ from $wList$;\\
\h {\bf endwhile};\\
\h  {\bf return} $G$;\\
\end{minipage}
} 
\caption{An Implementation of Algorithm $\beta$ with $O(m^2+w)$ complexity. \label{fig:beta-impl}}
\end{figure}

In the first step, we ensure that no component of $G$ exceeds $m$.
Since there are $m$ men and $G[i] \leq m$, there are $O(m^2)$ proposals up to $G$.
These proposals may be to different women. In step 2, we further trim $G$ to guarantee that 
we never have to handle more than $m$ women. Recall that our goal is to get an algorithm
whose complexity is at most linear in the number of women.
To that end, we first claim the following:
\begin{theorem}\label{thm:trimG}
Let $W'$ be the set of women that are part of the man-optimal marriage. \\
(a) Any stable marriage can only include women from $W'$.\\
(b) Any matching $G$ such that a man $i$ prefers a woman outside of $W'$ to $\rho(G, i)$ is not stable.
\end{theorem}
\begin{proof}
(a) Consider any $G$ that is a man-saturating matching with a woman outside of $W'$. Suppose for contradiction $G$
is a stable matching. Let $G_{o}$ be the man optimal stable matching. Then $G_o \leq G$.
However, this means that the number of unique women proposed until $G$ exceeds $m$. Hence, $G$
or any proposal vector greater than $G$ cannot be a stable marriage from Lemma \ref{lem:ubound}.\\
(b) Consider any $G$ that is a man-saturating matching such that some man $i$ prefers  a woman outside of $W$ to $\rho(G, i)$. 
Suppose $G$ is stable. From part (a), we conclude that $G$ includes proposal to $m$ women in $W'$.
Furthermore, if man $i$ prefers a woman outside of $W'$, we get that the number of unique women proposed until $G$ exceeds $m$.
Hence, from Lemma \ref{lem:ubound}, $G$ cannot be stable.
\end{proof}

We exploit Theorem \ref{thm:trimG} as follows. At step 2, we find the man-optimal stable marriage $L$. 
If there is any component $i$ of $G$ such that $G[i]$ is less than $L[i]$, then clearly there cannot be a stable marriage
before $G$. Now based on $L$, we can determine $W'$, the set of women that can be part of any stable marriage.
We now decrease $G[i]$ for each $i$ such that proposals are made until $G[i]$ are only to women in $W'$.
This step would not be required if the number of men and women were equal.


In the third step, we ensure that there is no man $i$ such that $rForbidden(G, i)$. Instead of maintaining forbidden men, it is easier to 
maintain $wList$, the list of all women in $W'$ who may be part of blocking pairs in $G$. We initialize $wList$ to all women in $W'$.
Whenever $wList$ becomes empty, there are no blocking pairs.
To satisfy a blocking pair, we first remove a woman $q$ from $wList$. 
We determine $p$, the top choice of that woman $q$ in a vector less than or equal to $G$.
There are three possibilities:
(1) If $p$ is matched with $q$ in $G$, then we are done with processing of $q$ and $q$ is deleted from $wList$.
(2) If $p$ has not made proposal to $q$ in any vector before $G$, then the woman $q$ must move on to her next choice.
(3) If $p$ had a proposal to $q$ before $G$ but not in $G$, then
we have that $(p,q)$ is a blocking pair. Since we are exploring choices in the order of $wchoice$, $(p,q)$ is the best blocking pair for $q$.
We satisfy this blocking pair by moving $G[p]$ to the choice given by the
woman $q$. This step corresponds to a downward traversal in the proposal lattice and the assignment for man $p$ improves
with this step. Since $q$ has her best proposal, she is deleted from $wList$.
The current partner of $p$ is added to $wList$, if not already on the list.

In the algorithm $\beta$, $wchoice$ for any woman can only increase. As $wchoice$ increases, women are assigned less preferable choices.
Similarly, the vector $G$ which corresponds to choices by men only decreases with the algorithm. Hence, the assignment to men only improves
with the execution of the algorithm.

We now have the following result that shows correctness of the implementation of Algorithm $\beta$.

\begin{theorem}\label{thm:beta-impl}
Given any proposal vector $I$, Algorithm $\beta$ in Fig. \ref{fig:beta-impl} returns the greatest stable proposal vector less than or equal to $I$.
\end{theorem}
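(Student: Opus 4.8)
The plan is to maintain, as an invariant across the three steps of the implementation, the property that the greatest stable vector less than or equal to the current $G$ coincides with the greatest stable vector less than or equal to $I$ (and that the algorithm returns \texttt{null} precisely when no stable vector is at most $I$). Steps~1 and~2 are handled directly by the earlier lemmas, and the real work is to equip the while loop of Step~3 with loop invariants that show it performs a faithful, non‑skipping downward traversal.

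For Step~1, replacing $I[i]$ by $\min(I[i],m)$ is sound by Lemma~\ref{lem:ubound}: no stable vector has a component exceeding $m$, so nothing stable below $I$ is lost. For Step~2 I would argue two points. First, returning \texttt{null} when some $G[i]<L[i]$ is correct because every stable vector is at least the man‑optimal vector $L$, so there is then no stable vector at most $I$. Second, trimming $G[i]$ down to the longest prefix $mpref[i][1],\dots,mpref[i][k]$ contained in $W'$ discards no stable vector $S\le I$: by Theorem~\ref{thm:trimG}(a) $S$ is matched inside $W'$, and by Theorem~\ref{thm:trimG}(b) man $i$ cannot prefer a woman outside $W'$ to $\rho(S,i)$, so $mpref[i][1..S[i]]\subseteq W'$; since also $S[i]\le\min(I[i],m)$, the trimming loop cannot stop before reaching $S[i]$, hence $S\le G$. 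Thus after Steps~1--2 we have $L\le G\le I$ (or a correct \texttt{null}); moreover $L\le G$ forces every woman of $W'$ to receive a proposal in some vector $\le G$, while the trimming guarantees all proposals in vectors $\le G$ are to women in $W'$, so $numw(G)=m$. Let $H$ denote the greatest stable vector $\le I$; we have just shown $H\le G$.

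For Step~3 I would carry the invariants: (i) for each $q\in W'$, no man whom $q$ prefers to $wpref[q][wchoice[q]]$ has proposed to $q$ in any vector $\le G$, so that the three branches $mrank[p][q]=G[p]$, $mrank[p][q]>G[p]$, $mrank[p][q]<G[p]$ mean exactly ``$q$ is matched to her best proposer'', ``$wchoice[q]$ is stale and must advance'', and ``$(p,q)$ is the best blocking pair for $q$''; (ii) $wList$ contains every woman that is part of a blocking pair in $G$, so that $wList=\emptyset$ together with $numw(G)=m$ yields, via Lemma~\ref{lem:rforbidden}, that $G$ is a stable marriage; and (iii) $L\le G\le H$ throughout. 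The crux is the maintenance of (iii): when the best blocking pair $(p,q)$ is satisfied by setting $G[p]:=mrank[p][q]$, the new value is at least $H[p]$ and at least $L[p]$. Indeed, if $p$ preferred $q$ to his $H$‑partner, then since $H\le G$ every proposer to $q$ in $H$ already proposed to $q$ in some vector $\le G$, over all of which (by (i)) $q$ prefers $p$, so $(p,q)$ would block $H$, contradicting its stability; the same argument with $L$ for $H$ gives $mrank[p][q]\ge L[p]$. Hence $H\le G$ (and $G\ge L$) is preserved, and because Lemma~\ref{lem:downward} guarantees $H[p]\ne G[p]$ before the update, the update does not overshoot $H$. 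Each such update strictly decreases some $G[p]$ while keeping $G\ge L$, and $wchoice[q]$ only increases and is bounded by $m$, so the potential $\sum_{p}(G[p]-L[p])+\sum_{q\in W'}(m-wchoice[q])$ strictly decreases and the loop terminates.

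Combining: at termination $wList=\emptyset$, so by (ii) $G$ is stable; by (iii) $H\le G$; and $G\le I$ throughout Step~3 since $G$ only decreases from its post--Step‑2 value. A stable vector lying above the greatest stable vector $\le I$ yet itself $\le I$ must equal $H$, so the returned $G$ equals $H$. I expect the main obstacle to be invariant~(i): improving man $p$ (decreasing $G[p]$) can demote a former proposer to $q$ into a non‑proposer, so $wchoice[q]$ legitimately lags behind $q$'s true current best proposer, and one must verify that the lazy advance in the $mrank[p][q]>G[p]$ branch restores (i) and that no woman is dropped from $wList$ while she is still in a blocking pair.
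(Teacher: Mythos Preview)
Your proposal is correct and follows essentially the same route as the paper. The paper's Lemma~\ref{lem:beta-impl} establishes exactly your three loop invariants: your~(i) is its part~1, your~(ii) is its part~2, and your intended~(iii) is its part~3, phrased there as ``any vector $\le I$ exceeding the current $G$ in some coordinate is unstable''. Note the evident typo in your statement of~(iii): you wrote $L\le G\le H$, but the maintenance argument you supply and the conclusion you draw both use $H\le G$ (together with $L\le G$), which is indeed what you prove and what the paper proves. The only packaging difference is that the paper, after recording the invariants, finishes by showing that $G[p]$ is decreased in Step~3 iff $rForbidden(G,p)$ holds, thereby reducing correctness of the implementation to correctness of the abstract Algorithm~$\beta$ of Fig.~\ref{fig:beta}; you instead argue directly from the invariants to $G=H$ at termination. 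Both arguments are short and rest on the same invariants, and your more detailed treatment of Steps~1--2 (via Lemma~\ref{lem:ubound} and Theorem~\ref{thm:trimG}) matches the paper's as well.
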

To show
the correctness of the implementation, we first show the following Lemma.

\begin{lemma}\label{lem:beta-impl}
The {\em while} loop in Step 3 of Fig. \ref{fig:beta-impl} has the following invariants.
\begin{enumerate}
\item
For any woman $q$, $wchoice$ is always less than or equal to the top choice she has before $G$.
\item
Any woman $q$ such that $(p,q)$ is a best blocking pair, is included in $wList$.
\item 
Any proposal vector $H$ such that $H$ is less than or equal to $I$ and $H[i] > G[i]$ for some $i$
is not a stable marriage.
\end{enumerate}
\end{lemma}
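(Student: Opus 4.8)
The plan is to prove all three statements \emph{simultaneously} by induction on the number of completed iterations of the \emph{while} loop of Step~3, taking as the base case the state on entry to that loop (i.e., the $G$, $wList$, $wchoice$ produced by Steps~1 and~2). Along the way I will carry two auxiliary invariants: (A)~$G \geq L$, and (B)~every proposal made in any vector $\leq G$ is to a woman of $W'$ (so $numw(G)=m$). For $q \in W'$, write $\mathit{best}(q)$ for the rank, in $q$'s list, of the man she most prefers among all men who have proposed to her in some vector $\leq G$; invariant~(A) makes $\mathit{best}(q)$ well-defined, since $q$'s partner in $L$ always counts. The reading of Invariant~1 I will use is $wchoice[q]\leq\mathit{best}(q)$ for all $q\in W'$, and its key consequence — used in every case below — is that when the loop body reaches its decision point, the man $p=wpref[q][wchoice[q]]$ has $mrank[p][q]\leq G[p]$ \emph{iff} $p$ is actually $q$'s most-preferred proposer so far, because then $wchoice[q]\leq\mathit{best}(q)\leq wrank[q][p]$ collapses to equality.

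For the base case, Invariant~1 is immediate ($wchoice\equiv 1$). For Invariant~3 after Step~1 I use $G[i]=\min(I[i],m)$ together with the elementary fact that in any stable matching a man's partner is among his top $m$ choices (each of his strictly-better choices is matched to a distinct other man): a stable $H\leq I$ with $H[i]>G[i]$ would force $H[i]>I[i]$ or $H[i]>m$, both impossible. After Step~2's trimming I invoke Theorem~\ref{thm:trimG}: on the coordinate just trimmed, $mpref[i][G[i]+1]\notin W'$, so a stable $H$ with $H[i]>G[i]$ would either match $i$ to a woman outside $W'$ (contradicting part~(a)) or make $i$ prefer that outside woman to $\rho(H,i)$ (contradicting part~(b)). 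The same reasoning, with $G\geq L$, shows no woman outside $W'$ ever receives a proposal in a vector $\leq G$, which establishes (A), (B), and hence Invariant~2 at entry (every blocking pair involves a woman of $W'=wList$).

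For the inductive step let $q$ be the head of $wList$ and $p=wpref[q][wchoice[q]]$, and split on how $mrank[p][q]$ compares with $G[p]$. If $mrank[p][q]=G[p]$, then $q$ is matched in $G$ to her best-ever proposer $p$, so no blocking pair can involve $q$ and dropping her from $wList$ preserves Invariant~2; nothing else changes. If $mrank[p][q]>G[p]$, the man at rank $wchoice[q]$ is not a proposer to $q$, so $wchoice[q]<\mathit{best}(q)$ strictly and incrementing $wchoice[q]$ keeps Invariant~1; nothing else changes. The substantive branch is $mrank[p][q]<G[p]$, where $p$ is $q$'s best-ever proposer and $(p,q)$ is the best blocking pair for $q$ (matching Algorithm~$\beta$'s rule). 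I will first prove the sub-claim $mrank[p][q]\geq L[p]$, which preserves~(A): if $p$ preferred $q$ to $\rho(L,p)$, stability of $L$ would force $q$ to prefer her $L$-partner $p'$ to $p$, and then either $p'$ still proposes to $q$ in $G$ (so $q$'s $G$-match beats $p$, contradicting that $(p,q)$ blocks) or $(p',q)$ blocks $G$ with $q$ preferring $p'$ to $p$ (contradicting that $(p,q)$ is \emph{best}). Given~(A), Invariant~1 survives the decrement of $G[p]$ because this can only worsen — never undefine — $\mathit{best}(q')$ while $wchoice$ is untouched; Invariant~2 survives because changing $G$ only at coordinate $p$ leaves every woman's set of proposers unchanged except $q$ (now matched to her best proposer, hence blocking-free) and $\rho(G_{\mathrm{old}},p)$ (explicitly added to $wList$), so no best blocking pair can arise outside $wList$; and Invariant~3 at coordinate $p$ follows from a strengthening of Lemma~\ref{lem:downward}: since $p$ is $q$'s most-preferred proposer among all vectors $\leq G$, any stable $H\leq G$ with $H[p]>mrank[p][q]$ would have $p$ prefer $q$ to $\rho(H,p)$ while every proposer to $q$ in $H$ is worse than $p$ for $q$, so $(p,q)$ blocks $H$; hence $H[p]\leq mrank[p][q]=G[p]_{\mathrm{new}}$.

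I expect the main obstacle to be exactly this last point: upgrading Lemma~\ref{lem:downward}, which only rules out $H[p]=G[p]$, to ruling out every stable $H\leq G$ with $H[p]>mrank[p][q]$. The upgrade goes through only because $p$ is $q$'s \emph{best-ever proposer} — a strictly stronger property than $(p,q)$ being merely the best blocking pair — and that property is precisely what Invariant~1 together with the $wchoice$-advancement logic supplies, which is why the three invariants genuinely must be proved together. The secondary subtlety is keeping $\mathit{best}(\cdot)$ well-defined throughout Step~3, which is why the sub-claim $mrank[p][q]\geq L[p]$ (equivalently, that the downward traversal never drops below the man-optimal matching) has to be threaded through the same induction.
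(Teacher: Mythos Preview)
Your proposal is correct and follows the same inductive skeleton as the paper's proof --- case-splitting on the comparison of $mrank[p][q]$ with $G[p]$ and using Invariant~1 to deduce that, in the decrement branch, $p$ is $q$'s most-preferred \emph{historical} proposer (which is exactly what the paper invokes when it writes ``By part~1 of this Lemma, $q$ prefers $p$ to any assignment it may have in $H$''). Your version is considerably more thorough: the paper leaves the auxiliary facts $G\geq L$ and $numw(G)=m$ implicit, never isolates the sub-claim $mrank[p][q]\geq L[p]$, and does not spell out why no new blocking pair can appear at a woman other than $\rho(G_{\mathrm{old}},p)$ --- all points you handle explicitly and correctly.
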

\begin{proof}
\begin{enumerate}
\item
It is initially true because $wchoice[q]$ is initialized to $1$ for all $q \in W'$.
It is incremented only when $mrank[p][q] > G[p]$, i.e., the man corresponding to $wchoice[q]$
has proposal to $q$ after $G[p]$.
\item 
It is true initially because every woman in included in $wList$ before the while loop.
A woman is removed from $wList$ only when $G[p]$ equals $mrank[p][q]$ where $p$
is her top choice before $G$, i.e., $q$ is matched to her top choice of all proposals made at or before $G$.
\item
The invariant holds initially because Step 1 and 2 decrease $G[i]$ from $I[i]$ only when
it either is beyond $m^{th}$ choice (step 1), or to a woman outside of $W'$ (step 2).
Now suppose that the invariant holds at the beginning of the while loop.
We decrease $G[p]$ to $mrank[p][q]$ only when there exists a woman $q$ such that 
$mrank[p][q]$ is less than $G[p]$. Consider any proposal vector such that $H[p]$ is greater than
$mrank[p][q]$. By part 1 of this Lemma, $q$ prefers $p$ to any assignment it may have in $H$.
Also, man $p$ prefers $q$ to any assignment it may have in $H$ (because $mrank[p][q] < H[p]$).
Hence, $(p,q)$ is a blocking pair in $H$.
\end{enumerate}
\end{proof}

We now give the proof of Theorem \ref{thm:beta-impl}.
\begin{proof}
We initialize $G$ to $I$. It is sufficient to show that any component $p$ of $G$ is decreased
iff $rForbidden(G,p)$. First, suppose that $rForbidden(G,p)$ holds. This implies that
there exists a woman $q$ such that $(p,q)$ is a best blocking pair. From Lemma \ref{lem:beta-impl} part 2, $q$ is included in $wList$. Whenever $q$ is the first element in $wList$, 
$mrank[p][q]$ is less than $G[p]$ and $G[p]$ is decreased.
Conversely, $G[p]$ is decreases when $mrank[p][q] < G[p]$. This implies $rForbidden(G, p)$.
\end{proof}

The number of women who have been proposed until $G$ at the beginning of step 2,
 is at most $O(m)$ irrespective of the total number of women in the system.
The {\em while} loop removes a woman from $wList$, increases $wchoice$ for at least one woman,
or decreases $G[p]$ for some man $p$ in every iteration. Each of these actions can be done
at most $O(m^2)$ number of times over all iterations giving us the
time complexity of Algorithm $\beta$ as $O(m^2+w)$.

\section{Algorithm $\gamma$: Path to Stability}  
We now present an algorithm that gives a path from any proposal vector to a stable marriage vector.
Note that 
depending on the initial proposal vector, both Algorithms $\alpha$ and $\beta$ may return null. For example, when the number of men is equal to the number of women
and the initial vector is greater than or incomparable to the woman-optimal vector, then the algorithm $\alpha$ will return null.
Similarly, if the initial vector is less than or incomparable to the man-optimal vector, then the algorithm $\beta$ returns null.
If the initial vector is incomparable to both the man-optimal and the woman-optimal proposal vectors, then both algorithms $\alpha$ and
$\beta$ will return null.
In the RVV setting,
we need to combine
a downwards traversal with an upwards traversal to go from an arbitrary proposal vector to a stable matching.
There are two choices for combining these traversals --- a downward traversal followed by an upwards traversal, or vice-versa.
We will use the former approach.
The RVV algorithm introduces men and women incrementally and does multiple upward and downward traversals.
\vspace*{-0.1in}
\begin{figure}
\fbox{\begin{minipage}  {\textwidth}\sf

{\bf input}: A stable marriage instance, initial proposal vector $I$\\
{\bf output}: a stable marriage $M$\\
%
$G := I$;

// Downward traversal \\
$K := max(G, U)$; //Compute $U$ using Algorithm $\beta$ with the initial vector as $[m,m,\ldots,m]$;\\
{\bf while} there exists a  man $i$ such that $rForbidden(K, i)$\\
\h \h $K[i] := K[i] - 1;$ $G[i] := G[i] - 1;$\\
{\bf endwhile};\\
//Upward traversal\\
{\bf while} there exists a  man $i$ such that $forbidden(G, i)$\\
\h $G[i] := G[i] + 1;$\\
 {\bf endwhile};\\
 return $G$;
\end{minipage}
} 
\caption{ Algorithm $\gamma$ with $O(m^2+w)$ complexity. \label{fig:gamma}}
\vspace*{-0.1in}
\end{figure}

The Algorithm $\gamma$ is shown in Fig. \ref{fig:gamma}.
Given any arbitrary initial vector $I$, we first do a downward traversal to get to a proposal vector
that is less than or equal to $U$, the largest possible stable marriage. If the 
initial vector is at most $U$, then this step is not necessary.
$U$ can be computed using Algorithm $\beta$ by using a downward traversal starting from the vector
$[m,m,\ldots,m]$. Our goal is to find blocking pairs in $G$ such that by satisfying them we
get to a proposal vector $G \leq U$. In contrast to algorithms in literature, we pick blocking pairs to satisfy carefully.
Specifically, during the downward traversal, we satisfy only those men whose component in the proposal vector
is beyond $U$.
To find a sequence from $I$ to $G$ such that $G \leq U$, we first compute a vector
$K$ as $\max(U, G)$. 
We now invoke a downward traversal on $K$ using $rForbidden$ function of algorithm $\beta$. Since Algorithm $\beta$ returns the greatest stable marriage less than the initial 
proposal vector (in our case $K$), it finds as blocking pair only those men $i$
such that $K[i] > U[i]$. By definition of $rForbidden$ any $j$ such that $K[j]$ equals $U[j]$ can not satisfy $rForbidden(K, j)$ because $U$ is a stable marriage. 

\begin{lemma}
Let $K = max(G, U)$. Then, for any $i$,
$rForbidden(K, i)$ implies $rForbidden(G,i)$.
\end{lemma}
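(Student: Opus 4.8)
The plan is to unfold the definition of $rForbidden$ on both $K$ and $G$ and compare the witness woman and the relevant preference comparisons. Recall $rForbidden(K,i)$ means there is a woman $q$ that prefers $i$ to every man who has proposed to $q$ in some vector $\leq K$, and moreover $i$ prefers $q$ to $\rho(K,i)$ (equivalently, $(i,q)$ is the best blocking pair for $q$ in $K$). I want to show the same woman $q$ witnesses $rForbidden(G,i)$. There are two conditions to re-establish for $G$: (1) $q$ prefers $i$ to all men who have proposed to $q$ in any vector $\leq G$, and (2) $i$ prefers $q$ to $\rho(G,i)$.

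For condition (1), the key observation is the monotonicity of $numw$-style quantities: since $K = \max(G,U) \geq G$, the set of proposals appearing in vectors $\leq G$ is a subset of those appearing in vectors $\leq K$. Hence if $q$ beats every proposer she sees below $K$, she certainly beats every proposer she sees below $G$. So condition (1) transfers downward for free. For condition (2), I need to relate $\rho(G,i)$ and $\rho(K,i)$. Here is where the definition $K = \max(G,U)$ matters: $K[i] = \max(G[i], U[i])$, so either $K[i] = G[i]$ (and then $\rho(K,i) = \rho(G,i)$, so (2) is immediate), or $K[i] = U[i] > G[i]$. In the latter case $\rho(G,i)$ is strictly higher on man $i$'s preference list than $\rho(K,i)$, i.e. $i$ prefers $\rho(G,i)$ to $\rho(K,i)$; combined with "$i$ prefers $q$ to $\rho(K,i)$," transitivity alone is not enough — I could have $i$ preferring $\rho(G,i)$ to $q$. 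So I must rule this out.

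The resolution: in this second case $G[i] < U[i]$, and I should invoke that $U$ is a stable marriage together with the structure of the downward traversal. Actually the cleanest route is to note that $q$ is not matched to $i$ in $K$ with $i$ proposing above $U[i]$'s woman — more precisely, since $(i,q)$ blocks $K$ and $U \leq K$ while $U$ is stable, $q$ must already be receiving in $U$ a proposal she likes at least as much as $i$'s; but by condition (1) applied with $U \leq K$, $q$ prefers $i$ to all her proposers below $K$, hence to all her proposers in $U$ — contradiction unless $q$ receives no proposal she weakly prefers to $i$ in $U$, which forces $(i,q)$ to block $U$ too. That contradicts stability of $U$. Therefore the second case $K[i] = U[i] > G[i]$ cannot arise at all under $rForbidden(K,i)$, and we are always in the easy case $K[i] = G[i]$, giving $\rho(G,i) = \rho(K,i)$ and hence condition (2) directly. (Alternatively, one shows $\rho(G,i)$ still sits on the far side of $q$ in $i$'s list — but the contradiction argument is more robust.)

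I expect the main obstacle to be exactly this second case: disentangling what "$K[i] = U[i] > G[i]$" would imply and showing it is inconsistent with $(i,q)$ being a blocking pair for $K$ and $U$ being stable. Everything else — the downward monotonicity of the set of proposers seen, and the trivial case $K[i]=G[i]$ — is routine. So the proof skeleton is: fix the witness $q$ for $rForbidden(K,i)$; argue condition (1) descends because $\{$proposals below $G\} \subseteq \{$proposals below $K\}$; then show $K[i] = G[i]$ by deriving a contradiction with stability of $U$ in the case $K[i] = U[i] > G[i]$; conclude $\rho(G,i) = \rho(K,i)$ and read off both conditions, so $rForbidden(G,i)$ holds.
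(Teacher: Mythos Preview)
Your proof is correct and takes a different, more direct route than the paper's. The paper argues the contrapositive: assuming $i$ is not $rForbidden$ in $G$, it asserts the existence of a stable marriage $H\le G$ with $H[i]=G[i]$ (this is the \emph{converse} of Lemma~\ref{lem:downward}, which the paper does not separately establish), and then uses $H\le K$ to conclude that $i$ is not $rForbidden$ in $K$. You instead unfold the definition of $rForbidden$ and exploit the specific structure $K=\max(G,U)$: the woman's-preference condition on the witness $q$ transfers from $K$ down to $G$ because the set of proposals seen below $G$ is contained in the set seen below $K$; for the man's-preference condition you split on whether $K[i]=G[i]$ (immediate) or $K[i]=U[i]>G[i]$, and you eliminate the second case by showing the same pair $(i,q)$ would block $U$ (since $\rho(U,i)=\rho(K,i)$ and every proposer to $q$ in $U$ already appears below $K$), contradicting the stability of $U$. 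Your argument is self-contained and makes the role of $U$ explicit; the paper's is terser but leans on an unstated converse of Lemma~\ref{lem:downward}.
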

\begin{proof}
Suppose $i$ is not $rForbidden$ in $G$. This means that there exists a stable marriage $H$ less than or equal to $G$ such that $H[i] = G[i]$. Since $G$ is less than or equal to $K$, 
we get that $H$ is a stable marriage less than or equal to $K$. However, this implies that
$i$ is not $rForbidden$ in $K$.
\end{proof}
Since $i$ is $rForbidden$ in $G$ it is safe to decrement $G[i]$ in search for a stable marriage. By repeating this process, we generate a sequence of proposal vectors that makes $G$ less than or equal
to $U$. Note that consecutive proposal vectors generated in this phase differ in
the proposals by at most one man. 
The downward traversal step can be viewed as invocation of Algorithm $\beta$ on $K$ such that whenever $K$ is updated, $G$ is updated as well. This downward traversal can be done in
$O(m^2+w)$ time. At the end of this step $G \leq U$, and we can start the second phase of the algorithm.

In the second phase, we do an upward traversal in which women improve their match.
We use the function $\alpha$ to find the least stable marriage that is greater than or equal to $G$.
In this phase, we satisfy blocking pairs by improving the match of women. Since the input to algorithm $\alpha$
is less than or equal to $U$, we are guaranteed to get a stable marriage at the end.

Hence, we have the following result.
\begin{theorem}
Given any initial proposal vector $I$, there exists a sequence of 
proposal vectors $G_0, G_1, \ldots, G_t$ such that
$G_0$ is equal to $I$, $G_t$ corresponds to a stable matching and for
each  $k$ ($1 \leq k \leq t$), $G_k$ is obtained from $G_{k-1}$ by either increasing the choice number for one man (thereby worsening his match) or
decreasing the choice number for one man (thereby improving his match). The value of $t$ is at most $2m^2$ where $m$ is the number of men.
\end{theorem}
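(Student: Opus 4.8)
The plan is to verify the final theorem by assembling the pieces already established for Algorithms $\beta$, $\alpha$, and the combination $\gamma$. The sequence $G_0,\dots,G_t$ is exactly the sequence of proposal vectors produced by running Algorithm $\gamma$ on input $I$: the downward phase produces vectors that differ from their predecessor by decrementing one coordinate $G[i]$ (when $rForbidden(K,i)$ holds), and the upward phase produces vectors that differ by incrementing one coordinate $G[i]$ (when $forbidden(G,i)$ holds). So the ``each step changes one man's choice by $\pm 1$'' clause is immediate from the structure of the algorithm, and the only two things that actually require argument are: (a) that the algorithm terminates with $G_t$ a stable marriage, and (b) that $t \le 2m^2$.

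For correctness, I would argue in two stages. First, after the downward phase we have $G \le U$: this is because the downward phase is, by construction, an invocation of Algorithm $\beta$ on the vector $K = \max(G,U)$ with $G$ shadowing the updates to $K$, and by the displayed lemma $rForbidden(K,i)$ implies $rForbidden(G,i)$, so every decrement performed on $K$ (equivalently on $G$) is ``safe'' in the sense of Lemma~\ref{lem:downward}; Theorem on $\beta$ (its correctness) guarantees $\beta$ drives $K$ down to the greatest stable marriage $\le K$, and since $U$ is itself a stable marriage $\le K$, the final $K$ — and hence the final $G$, which tracks $K$ coordinatewise once $G$ has been pushed below $U$ — is $\le U$. (One should note $U = L_B$ in the excerpt's notation is the woman-optimal, i.e.\ \emph{largest}, stable marriage, computed by $\beta$ from $[m,\dots,m]$.) Once $G \le U$, the second phase runs Algorithm $\alpha$ from $G$; by Theorem~\ref{thm:alpha}, $\alpha$ returns the least stable marriage $\ge G$ provided one exists — and one does, namely $U$, since $G \le U$ and $U$ is stable. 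Hence the upward phase terminates at a genuine stable proposal vector $G_t$.

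For the length bound: both phases are monotone traversals of the proposal lattice in a single direction. In the downward phase each step strictly decreases $\sum_i G[i]$ by exactly one, and $G$ never drops below the man-optimal vector $L$ (which has $\sum_i L[i] \ge m$) nor rises above its starting value (coordinatewise $\le m$ after the initial clamping at line (1) of $\beta$, or $\le w$ before it); so the downward phase has at most $m\cdot m = m^2$ steps. Symmetrically, the upward phase strictly increases $\sum_i G[i]$ by one each step, starting from $G \ge L$ (so $\sum \ge m$) and ending at $G_t \le U$ with every coordinate $\le m$, so it too has at most $m^2$ steps. Summing gives $t \le 2m^2$.

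The main obstacle I anticipate is stage (a) of the correctness argument — specifically making airtight the claim that the downward phase, in which $G$ is updated in lockstep with $K = \max(G,U)$ rather than being run through $\beta$ directly, actually terminates with $G \le U$ and generates only single-coordinate decrements. One must check that $rForbidden(K,i)$ never fires on a coordinate already at or below $U[i]$ (this is the remark in the text that $U$ being stable forbids it), so that the decrements on $K$ are precisely the decrements needed to bring the ``excess'' coordinates of $G$ down to $U$, and that the shadowing update $G[i] := G[i]-1$ stays synchronized; once $G[i]$ has reached $U[i]$ on every $i$, $G$ and $K$ coincide and the loop halts. The subtlety is that $G$ may start strictly below $U$ in some coordinates while above it in others, so one is really arguing coordinatewise that the final common value is $\min(G_{\text{start}}, U)$ restricted to the coordinates $\beta$ touches, which is $\le U$ everywhere. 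I would isolate this as the one place deserving a careful invariant.
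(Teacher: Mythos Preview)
Your proposal is correct and follows essentially the same approach as the paper: the paper's ``proof'' is really just the discussion preceding the theorem statement (Algorithm~$\gamma$, the lemma that $rForbidden(K,i) \Rightarrow rForbidden(G,i)$, the observation that $rForbidden(K,j)$ cannot hold when $K[j]=U[j]$ since $U$ is stable, and then the invocation of Theorem~\ref{thm:alpha} for the upward phase), and you have assembled these pieces faithfully. If anything you are more careful than the paper in isolating the lockstep invariant for the downward phase and in noting that the $m^2$ bound per phase relies on the clamping to coordinates $\le m$ (which the paper's Fig.~\ref{fig:gamma} does not make explicit, though Algorithm~$\beta$'s line~(1) does).
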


This sequence can be obtained using algorithm $\gamma$ that takes $O(m^2+w)$ computation time
given all the data structures (preference lists and rankings) in memory.

\begin{figure}
\fbox{\begin{minipage}  {\textwidth}\sf

{\bf input}: A stable marriage instance, initial proposal vector $I$\\
{\bf output}: a stable marriage $M$\\
\h compute $U$ using Algorithm $\beta$ with the initial vector as $[m,m,\ldots,m]$;\\
\h $G := I$;
\h    $K := max(G, U)$;\\
\h {\bf while} $\exists$ a  man $i$ such that $(i,q)$ is a best blocking pair in $K$\\
\h \h    set $K[i]$ and $G[i]$ to the choice corresponding to woman $q$;\\
\h \h    generate a matching that satisfies the blocking pair $(i,q)$;\\
\h {\bf endwhile};\\
\\
\h {\bf while} $\exists$ a  man $i$ s.t. $i$ unmatched or $(i,q)$ is a blocking pair in $G$\\
\h \h set $G[i]$ to the next woman $q$ who would accept proposal from $i$;\\
\h \h    generate a matching that moves the man $i$ from the current partner to $q$;\\
\h {\bf endwhile};\\
\h return $G$;
\end{minipage}
} 
\caption{ Algorithm $\gamma$ that generates a sequence of matchings. \label{fig:gamma2}}
\end{figure}

Since the RVV Algorithm generates a sequence of matchings instead of proposal vectors, we show how to generate a sequence of matchings explicitly instead of proposal vectors in Fig. \ref{fig:gamma2}. The downward traversal is performed by using the best blocking pairs in $K$.
The matching is generated from the proposal vector $G$ as defined in Section \ref{sec:proposal-vector-lattice}.
Observe that these matchings may not be men-saturating and therefore some men and women may be unmatched.
The upward traversal is performed by matching those men who are either unmatched or
matched to a woman in a blocking pair. Clearly, the length of the sequence of these
matchings is at most $O(m^2)$.

\section{Algorithm $\delta$: Stable Matching at the Shortest Distance}
In this section, we give an algorithm, called Algorithm $\delta$, that finds the stable matching with the least distance of all stable matchings relative to the initial proposal vector. 

Given an arbitrary proposal vector (not necessarily a matching) $I$, we want to find a stable matching $M$ such that the distance between the proposal vector $I$ and the stable matching $M$ is minimized over all stable matchings, $\mathcal{M}$. The distance we consider here is $L_1$ distance, a.k.a Manhattan distance between two vectors. We denote the distance as $dist(I, M) = \|I - M\|_1$. The problem hence can be rephrased as: find the
marriage $M \in \mathcal{M}$ that minimizes $dist(I,M)$.

It is well-known that the convex hull of stable matchings of an arbitrary bipartite preference system can be described by a linear system \cite{rothblum1992characterization} as follows: 
\begin{subequations}
\label{eq:linear}
\begin{align}
    \sum_{j \in [w]} x_{i,j} \leq 1 \hspace{10pt} &\forall i \in [m] \\
    \sum_{i \in [m]} x_{i,j} \leq 1 \hspace{10pt} &\forall j \in [w] \\
    \sum_{i' \in [m]; i' >_j i} x_{i',j} + \sum_{j' \in [w]; j' >_i j} x_{i,j'} + x_{i,j} \geq 1 \hspace{10pt} &\forall (i,j) \in [m] \times [w]
\end{align}
\end{subequations}

Here, we define that for each man or woman $i$, $p>_iq$ denotes that $i$ prefers $p$ over $q$ in his/her preference list. Rothblum \cite{rothblum1992characterization} proved that the linear system above is integral, i.e. every basic feasible solution of Equation \ref{eq:linear} is integral. Suppose that every possible marriage $(i,j)$ has a cost $c(i,j)$, we can find a minimum-cost stable matching in polynomial time by solving the LP above. 

Now we show that our problem of minimizing the distance between an initial proposal vector and any stable matching can be translated into a minimum-cost stable matching problem with a carefully designed cost function.

For each pair $(i,j)$, we assign the cost $c(i,j) = |I[i] - mrank[i][j]|$. Hence, for each stable matching $M$, we have:

    $dist(I, M) = \sum_{i \in [m]} |I[i] - M[i]|
               = \sum_{(i,j) \in [m] \times [w]} c(i,j) \cdot \mathbbm{1}_{\rho(M,i) = j}$

Hence, we can rewrite our problem as:

\begin{mini!}
    {}{\sum_{(i,j) \in [m] \times [w]} c_{i,j} \cdot x_{i,j}}{}{}{}
\addConstraint{\sum_{j \in [w]} x_{i,j} \leq 1}{\hspace{10pt}\forall i \in [m]}
\addConstraint{\sum_{i \in [m]} x_{i,j} \leq 1}{\hspace{10pt}\forall j \in [w]}
\addConstraint{\sum_{i' \in [m]; i' >_j i} x_{i',j} + \sum_{j' \in [w]; j' >_i j} x_{i,j'} + x_{i,j} \geq 1}{\hspace{10pt} \forall (i,j) \in [m] \times [w]}
\end{mini!}

Solving the above LP gives us a stable matching that is nearest to the initial proposal vector.
This LP has $O(n^2)$ variables and constraints where $n$ is the total number of men and women.
However, we note that
the minimum-cost stable matching problem can be reduced to the minimum-cost closed subset of a poset due to the rotation poset structure of stable matching problem. See \cite{gusfield1989stable} for more details of rotation poset. Feder \cite{feder1994network} has shown that the minimum-cost stable matching problem in a bipartite preference system can be solved in $O(n^3)$ time if $max(c_{i,j}) = O(n)$ .

We summarize the preceding discussion as the following theorem.
\begin{theorem}
Given an arbitrary proposal vector $I$, we can find a stable matching $M$ that minimizes the distance $dist(I,M)$ over all stable matchings in $O(n^3)$ time where $n$ is the total number of men and women.
\end{theorem}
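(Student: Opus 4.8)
The plan is to recast the problem as a minimum-cost stable matching instance and then lean on two cited results: Rothblum's integral LP description of the stable-matching polytope, and Feder's algorithm for minimum-cost stable matching via the rotation poset. First I would assign to each acceptable pair $(i,j)$ the cost $c(i,j) = |I[i] - mrank[i][j]|$, as in the discussion above, and observe that for any matching $M$ with indicator vector $x^M$ one has $dist(I,M) = \sum_{i \in [m]} |I[i] - M[i]| = \sum_{(i,j)} c(i,j)\, x^M_{i,j}$ (an unmatched man contributing $0$, which one can arrange by the convention $M[i] = I[i]$, or by noting such a man is absent from every stable matching under complete lists). Hence minimizing $dist(I, \cdot)$ over stable matchings is exactly a minimum-cost stable matching problem for this cost.

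Next I would invoke Rothblum's theorem \cite{rothblum1992characterization}: the polytope cut out by the linear system~(\ref{eq:linear}) is integral, and its vertices are precisely the indicator vectors of stable matchings. Therefore the LP that minimizes $\sum_{(i,j)} c_{i,j} x_{i,j}$ over those constraints attains its optimum at such a vertex, giving a stable matching at minimum distance from $I$; since the system has $O(n^2)$ variables and constraints this already proves the statement in polynomial time via any LP solver.

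To obtain the claimed $O(n^3)$ bound I would instead route through the rotation poset (see \cite{gusfield1989stable}): every stable matching corresponds to a closed subset of the rotation poset, eliminating a rotation changes the partner of each affected man in a controlled way, and summing these changes lets one write $dist(I,M)$ as a fixed constant plus a weight aggregated over the rotations in the chosen closed set. The closest stable matching is then a minimum-weight closed subset of the rotation poset, and Feder's network-flow procedure \cite{feder1994network} computes this in $O(n^3)$ time whenever $\max_{i,j} c_{i,j} = O(n)$. That hypothesis holds here because $I[i]$ and $mrank[i][j]$ both lie in $\{1, \dots, w\}$ with $w \le n$, so $c(i,j) \le n$.

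The main obstacle is careful bookkeeping rather than any conceptual hurdle: one must verify that the rotation-weight representation of $dist(I,M)$ is exactly correct, handle the fact that these weights can be negative (which Feder's reduction to a minimum-weight closed set accommodates), and confirm that, after any shift needed to keep the weights integral, they remain polynomially --- indeed $O(n)$ --- bounded so that Feder's complexity applies. All of this is assembling cited tools; no new combinatorial insight is needed.
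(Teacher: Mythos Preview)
Your proposal is correct and follows essentially the same route as the paper: define $c(i,j)=|I[i]-mrank[i][j]|$ so that $dist(I,M)$ becomes a linear cost over the matching, invoke Rothblum's integrality for the stable-matching polytope to get polynomial time, and then appeal to the rotation-poset reduction and Feder's $O(n^3)$ algorithm under the bound $\max c_{i,j}=O(n)$. The paper's argument is exactly this summary of the preceding discussion; your version is slightly more careful in flagging the bookkeeping issues (unmatched men, sign of rotation weights), but the approach is identical.
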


\vspace*{-0.1in}
\section{Conclusions and Future Work}
We have proposed algorithms to find a stable matching starting from a given initial matching
or an initial proposal vector. 
Algorithm $\alpha$ ($\beta$) returns the stable matching with least distance 
of all stable matching that are better than the initial matching from the women's perspective (the man's perspective) in $O(n^2)$ time. 
Algorithm $\delta$ returns the stable matching with least distance from the initial proposal vector
in $O(n^3)$ time. The following problem is open.
Is there an efficient $O(n^2)$ algorithm that returns the {\em closest} stable proposal vector given
any initial proposal vector $I$?
\begin{small}
\bibliographystyle{plainurl}
\bibliography{fmaster}
\end{small}






\end{document}